\newtheorem{theorem}{Theorem}
\newtheorem{observation}{Observation}
\newtheorem{corollary}{Corollary}
\newtheorem{remark}{Remark}
\date{}
\begin{document}
\title{Generating Dependent Random Variables Over Networks}
\author{
\authorblockN{Amin Aminzadeh Gohari, Venkat Anantharam\\}
\authorblockA{aminzadeh@sharif.edu, ananth@eecs.berkeley.edu
}}

\maketitle
\begin{abstract}  In this paper we study the problem of generation of dependent random variables, known as the ``coordination capacity" \cite{CoverPermuter} \cite{Cuff}, in multiterminal networks. In this model $m$ nodes of the network are observing i.i.d. repetitions of $X^{(1)}$, $X^{(2)}$,..., $X^{(m)}$ distributed according to
$q(x^{(1)},...,x^{(m)})$. Given a joint distribution
$q(x^{(1)},...,x^{(m)},y^{(1)},...,y^{(m)})$, the final goal of the $i^{th}$ node is to construct the i.i.d. copies of $Y^{(i)}$ after the communication over the network where $X^{(1)}$,
$X^{(2)}$,..., $X^{(m)}, Y^{(1)}$, $Y^{(2)}$,..., $Y^{(m)}$ are
jointly distributed according to $q(x^{(1)},...,x^{(m)},y^{(1)},...,y^{(m)})$. To do this, the nodes can exchange messages over the network at rates not exceeding the capacity constraints of the links. This problem is difficult to solve even for the special case of two nodes. In this paper we prove new inner and outer bounds on the achievable rates for networks with two nodes.
\end{abstract}

\section{Introduction}
Traditionally coding is used in networks to distribute information among
the nodes. However in some applications (certain) nodes of the network
may need to coordinate with other nodes
to carry out some joint action
\cite{AB}, \cite{CoverPermuter}.
To accomplish this coordination, the nodes could in general talk back and forth
with each other. Given capacity constraints on the links between
the nodes, the goal may be to minimize the amount of communication
needed to accomplish the coordination.
This is not a traditional communication problem.
There is no explicit message to be transmitted,
and any node or set of nodes involved in the coordination
does not necessarily have to figure out the task
to be performed by the other nodes.
It is valuable to develop a general framework that
includes both the traditional demands of transmission of
(possibly correlated) sources over networks, and coordination demands.
Such a model has started to evolve in the recent literature
\cite{CoverPermuter, Cuff}, see also \cite{KramerSavari}.
Our formulation can be understood as a natural extension of these works.

Roughly speaking, we assume that the $m$ nodes of the network are observing i.i.d. repetitions of $X^{(1)}$,
$X^{(2)}$,..., $X^{(m)}$ distributed according to
$q(x^{(1)},...,x^{(m)})$. Given a joint distribution
$q(x^{(1)},...,x^{(m)},y^{(1)},...,y^{(m)})$, the final goal of the $i^{th}$ node is to compute the i.i.d. copies of $Y^{(i)}$ after the communication over the network where $X^{(1)}$,
$X^{(2)}$,..., $X^{(m)}, Y^{(1)}$, $Y^{(2)}$,..., $Y^{(m)}$ are
jointly distributed according to
$\\q(x^{(1)},...,x^{(m)},y^{(1)},...,y^{(m)})$. To do this, the nodes can exchange messages over the network at rates not exceeding the capacity constraints of the links. We assume that there is a directed edge from the $i^{th}$ node to the $j^{th}$ node with the rate constraint $R_{i,j}$ for $1\leq i,j \leq m$.
We say that the network with the set of
rates $R_{i,j}$ for $1\leq i,j \leq m$ is admissible if the $i^{th}$
node is able to create the i.i.d. copies of $Y^{(i)}$ within a
vanishing total variation distance. All (or a subset) of the nodes of the network may have access to \emph{common randomness} at a certain rate in a given application. This depends on the possibility of a parallel resource that provides common randomness to the nodes at some constant rate. For simplicity we assume that \emph{no common randomness is
provided to the nodes}, although this assumption is not in principle necessary and can be further explored. On the other hand, private randomization at the individual nodes can very well be feasible in a practical network. Thus, we provide the nodes with such ability throughout this paper.

Note that the traditional problem of communication of messages over a network
can be thought of as an special case of the problem defined above.
One just needs to suitably choose $q(x^{(1)},...,x^{(m)},y^{(1)},...,y^{(m)})$
where the i.i.d. copies $X^{(i)}$ represent the information initially available to
the $i^{th}$ node, and the i.i.d. copies $Y^{(i)}$ represent the
desired information of the $i^{th}$ node after the communication.
Therefore the advantage of network coding over pure routing, or the
insufficiency of linear codes are of relevance here as well.
Lastly we note that there are some similarities between the problem
of ``communication complexity" and this problem when $Y^{(i)}$s are functions of $(X^{(1)},...,X^{(m)})$.

The simplest model to consider is the network with two nodes.
This special case is also interesting because one can
partition the nodes of a large network into two sets, $S$ and $S^c$, and
treat the nodes on each side of the partition as a single supernode.
Thus, any outer bound on the two-node problem results in an outer bound for arbitrary networks. Still, the problem is difficult to solve even in the special case of two nodes. In this paper, we restrict ourselves to the special case of two nodes where we prove new lower and upper bounds. Our results build upon and generalize some of the results in \cite{CoverPermuter} and \cite{Cuff}.

This paper is organized as follows. In section \ref{sec:definition}, we introduce the basic notation and
definitions used in this paper. Section \ref{Section:MainResults}
contains the main results of the paper, and section
\ref{Section:Proofs} includes the proofs.

\section{Definitions}\label{sec:definition}
In this section we provide a rigorous definition for the model described in the
introduction for the case of two nodes. A general model for the case of $m$
terminals can be defined along the same lines, as sketched in the preceding
section. This we omit, because we prove results only for the case of two nodes.

Take an arbitrary $q(x^{(1)}, x^{(2)}, y^{(1)}, y^{(2)})$. Given an arbitrary natural number $r$, we say
that $(R_{12}, R_{21})$ is admissible with \emph{$r$ interactive
rounds of communication} if for any positive $\epsilon$, there is a
natural number $n_0$ where the following holds for any $n\geq n_0$: the two nodes
observe $n$ i.i.d. copies of $X^{(1)}, X^{(2)}$, i.e.
$X^{(1)}_{1:n}, X^{(2)}_{1:n}$. Assume that the first node is using
the external randomness $M_1$, and the second node is using the
external randomness $M_2$. Random variables $M_1$, $M_2$, and
$X^{(1)}_{1:n}X^{(2)}_{1:n}$ must be mutually independent. Let $C_1,
C_2, ..., C_r$ denote the interactive communication used in the
code, and let $\textbf{C}=(C_1,
C_2, ..., C_r)$. We have $H(C_i|C_{1:i-1}X^{(1)}_{1:n}M_1)=0$ if $i$ is odd,
$H(C_i|C_{1:i-1}X^{(2)}_{1:n}M_2)=0$ if $i$ is even. We further have
$$\frac{1}{n}\sum_{i: odd}H(C_i)\leq R_{12}$$
and
$$\frac{1}{n}\sum_{i: even}H(C_i)\leq R_{21}.$$
Following the communication, the first node creates
$\widehat{Y}^{(1)}_{1:n}$, and the second node creates
$\widehat{Y}^{(2)}_{1:n}$ such that the total variation distance
between $p(X^{(1)}_{1:n}, X^{(2)}_{1:n}, \widehat{Y}^{(1)}_{1:n},
\widehat{Y}^{(2)}_{1:n})$ and the distribution of $(X_{1:n}^{(1)},
X_{1:n}^{(2)}, Y_{1:n}^{(1)}, Y_{1:n}^{(2)})$, constructed by taking
$n$ i.i.d copies of $q(x^{(1)}, x^{(2)}, y^{(1)}, y^{(2)})$, is less
than or equal to $\epsilon$. In other words
\begin{align*}&\sum_{x^{(1)}_{1:n}, x^{(2)}_{1:n}, y^{(1)}_{1:n}, y^{(2)}_{1:n}}\\&\big|p(X^{(1)}_{1:n}=x^{(1)}_{1:n}, X^{(2)}_{1:n}=x^{(2)}_{1:n}, \widehat{Y}^{(1)}_{1:n}=y^{(1)}_{1:n},
\widehat{Y}^{(2)}_{1:n}=y^{(2)}_{1:n})-\\&p(X^{(1)}_{1:n}=x^{(1)}_{1:n}, X^{(2)}_{1:n}=x^{(2)}_{1:n}, Y^{(1)}_{1:n}=y^{(1)}_{1:n},
Y^{(2)}_{1:n}=y^{(2)}_{1:n})\big|\leq \epsilon.
\end{align*}

If $r=1$, we have only one communication from the first node to the
second node.

For the special case of $Y^{(1)}=X^{(1)}$ and
$X^{(2)}=constant$ and $r=1$, one gets a problem that is a special
case of the problem considered by Paul Cuff \cite{Cuff}.

\section{Statement of the Results}\label{Section:MainResults}

Consider the case of $r$ rounds of interactive
communication, meaning that the two nodes talk back and forth for
$r$ rounds.
\subsection{Converse results}
$\bullet$ \emph{The case of $r=1$ (i.e. $R_{21}=0$):}

When $Y^{(1)}=X^{(1)}$, $X^{(2)}=constant$, one gets a problem that
is a special case of Paul Cuff's result \cite{Cuff}. Cuff showed
that in this case the minimum admissible $R_{12}$ is equal to the
Wyner common information \cite{Wyner}, i.e.
\begin{eqnarray*}\inf_{Y^{(2)}- U- X^{(1)}}I(U;X^{(1)}Y^{(2)}).\end{eqnarray*}
We generalize this result for arbitrary $q(x^{(1)}, x^{(2)},
y^{(1)}, y^{(2)})$:
\begin{theorem}\label{Thm:1} Assume that $r=1$, that is when $R_{21}=0$. Then one must have $Y^{(1)}- X^{(1)}- X^{(2)}$, and furthermore $(R_{12},0)$
belongs to $\mathcal{R}_{1}$ defined as follows:\footnote{The authors would like to thank Mohammad Hossein Yassaee for pointing out a typo in the statement of this theorem.}
\begin{eqnarray*}\big\{(R_{12},0)\in \mathcal{R}^2&:& \exists
p(u, x^{(1)}, x^{(2)}, y^{(1)}, y^{(2)})\in T_1 \mbox{ s.t.
}\\R_{12}&\geq& I(U;X^{(1)}Y^{(1)}Y^{(2)}|X^{(2)})\big\},
\end{eqnarray*}
where
\begin{eqnarray*}
&T_1\triangleq\big\{&p(u, x^{(1)}, x^{(2)}, y^{(1)}, y^{(2)}):\\&&
X^{(1)}, X^{(2)}, Y^{(1)}, Y^{(2)}\sim q(x^{(1)}, x^{(2)}, y^{(1)},
y^{(2)}),\\&& U- X^{(1)}- X^{(2)},\\&&
Y^{(1)}- UX^{(1)}- X^{(2)}Y^{(2)},\\&&
Y^{(2)}- UX^{(2)}- X^{(1)}Y^{(1)},\\&&|\mathcal{U}|\leq
|\mathcal{X}^{(1)}||\mathcal{Y}^{(1)}||\mathcal{X}^{(2)}||\mathcal{Y}^{(2)}|+1\big\}.
\end{eqnarray*}
\end{theorem}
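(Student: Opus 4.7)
\emph{Proof plan.} The plan is first to establish the Markov chain $Y^{(1)}-X^{(1)}-X^{(2)}$ as a necessary condition for achievability with $r=1$, and then to perform a standard single-letterization of $H(C_1)$ with a carefully chosen auxiliary random variable.

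For the Markov chain, the key observation is that in any admissible $r=1$ code the induced distribution $p$ makes $\widehat{Y}^{(1)}_{1:n}$ a deterministic function of $(X^{(1)}_{1:n},M_1)$, with $M_1$ independent of $(X^{(1)}_{1:n},X^{(2)}_{1:n})$. Combined with the exact i.i.d.\ structure of $(X^{(1)}_{1:n},X^{(2)}_{1:n})$, this forces $\widehat{Y}^{(1)}_i$ to be conditionally independent of $X^{(2)}_i$ given $X^{(1)}_i$ for every $i$. Since the single-letter marginal of $(X^{(1)}_i,X^{(2)}_i,\widehat{Y}^{(1)}_i)$ lies within $\epsilon$ in total variation of $q(x^{(1)},x^{(2)},y^{(1)})$, and the (finite-alphabet) mutual information $I(\widehat{Y}^{(1)}_i;X^{(2)}_i\mid X^{(1)}_i)$ is continuous in this marginal, letting $\epsilon\downarrow 0$ forces $I_q(Y^{(1)};X^{(2)}\mid X^{(1)})=0$, i.e.\ $Y^{(1)}-X^{(1)}-X^{(2)}$ under $q$.

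For the rate bound, begin with the standard cut $n R_{12}\ge H(C_1)\ge H(C_1\mid X^{(2)}_{1:n})\ge I(C_1;X^{(1)}_{1:n}\widehat{Y}^{(1)}_{1:n}\widehat{Y}^{(2)}_{1:n}\mid X^{(2)}_{1:n})$. Introduce a time-sharing variable $Q$ uniform on $\{1,\dots,n\}$ independent of all other random variables, define $U_i=(C_1,X^{(2)}_{\setminus i})$, and let $U=(U_Q,Q)$. The critical verification is that $U$ satisfies every Markov constraint defining $T_1$ \emph{exactly} under the induced distribution. $U-X^{(1)}-X^{(2)}$ holds because, conditional on $X^{(1)}_Q$, the coordinate $X^{(2)}_Q$ is independent of every other source coordinate and of $M_1$; $Y^{(2)}-UX^{(2)}-X^{(1)}Y^{(1)}$ follows because $M_2$ is independent of $(X^{(1)}_{1:n},M_1,X^{(2)}_{1:n})$; and $Y^{(1)}-UX^{(1)}-X^{(2)}Y^{(2)}$ uses that $C_1$ is a function of $(X^{(1)}_{1:n},M_1)$ alone, so that conditioning on $C_1$ preserves the conditional independence of $(X^{(1)}_{\setminus Q},M_1)$ and $(X^{(2)}_Q,M_2)$ given $(X^{(1)}_Q,X^{(2)}_{\setminus Q})$.

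Next, I would split the $n$-letter mutual information as the difference $H_p(X^{(1)}_{1:n}\widehat{Y}^{(1)}_{1:n}\widehat{Y}^{(2)}_{1:n}\mid X^{(2)}_{1:n})-H_p(X^{(1)}_{1:n}\widehat{Y}^{(1)}_{1:n}\widehat{Y}^{(2)}_{1:n}\mid C_1,X^{(2)}_{1:n})$, upper-bound the subtracted term by $\sum_i H_p(X^{(1)}_i\widehat{Y}^{(1)}_i\widehat{Y}^{(2)}_i\mid U_i,X^{(2)}_i)$ via the chain rule and conditioning-reduces-entropy, and invoke a Fannes-type continuity estimate to replace $H_p(X^{(1)}_{1:n}\widehat{Y}^{(1)}_{1:n}\widehat{Y}^{(2)}_{1:n}\mid X^{(2)}_{1:n})$ by its target-i.i.d.\ value $n\,H_q(X^{(1)}Y^{(1)}Y^{(2)}\mid X^{(2)})$ up to an additive slack $n\delta(\epsilon)$ with $\delta(\epsilon)\to 0$. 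The same continuity applied to each single-letter marginal converts $n\,H_q(\cdots)$ back into $\sum_i H_p(X^{(1)}_i\widehat{Y}^{(1)}_i\widehat{Y}^{(2)}_i\mid X^{(2)}_i)$, so that combining the pieces gives $R_{12}+\delta(\epsilon)\ge I(U;X^{(1)}_Q\widehat{Y}^{(1)}_Q\widehat{Y}^{(2)}_Q\mid X^{(2)}_Q)$. A Fenchel--Eggleston / Carath\'eodory argument then replaces $U$ by one on an alphabet of size at most $|\mathcal{X}^{(1)}||\mathcal{Y}^{(1)}||\mathcal{X}^{(2)}||\mathcal{Y}^{(2)}|+1$ without changing either the mutual information or the Markov constraints, and compactness of the resulting finite-dimensional feasible set lets me extract a convergent subsequence as $\epsilon\downarrow 0$ whose limit witnesses $(R_{12},0)\in\mathcal{R}_1$. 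The main obstacle is this continuity step: one must track the Fannes slack carefully so that it remains $O(\delta(\epsilon))$ per letter, exploiting that $(X^{(1)}_{1:n},X^{(2)}_{1:n})$ is exactly i.i.d.\ under the induced distribution and only the $(\widehat{Y}^{(1)},\widehat{Y}^{(2)})$ portion of the joint is approximate.
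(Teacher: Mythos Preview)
Your proposal is correct and follows essentially the same route as the paper: the same auxiliary $U=(Q,C,X^{(2)}_{\setminus Q})$, the same split of $I(C;X^{(1)}_{1:n}\widehat{Y}^{(1)}_{1:n}\widehat{Y}^{(2)}_{1:n}\mid X^{(2)}_{1:n})$ into two entropy terms with Csisz\'ar--K\"orner/Fannes continuity applied to the positive term, and the same Carath\'eodory reduction followed by a compactness argument to pass $\epsilon\downarrow 0$. You are in fact more explicit than the paper's written proof, which spells out only the Markov chain $\widehat{Y}^{(2)}-UX^{(2)}-X^{(1)}\widehat{Y}^{(1)}$ and does not separately address the preliminary necessity of $Y^{(1)}-X^{(1)}-X^{(2)}$.
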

$\bullet$ \emph{The case of a fixed $r\geq 2$:}
\begin{theorem}\label{Thm:2} Take an arbitrary $q(x^{(1)}, x^{(2)},
y^{(1)}, y^{(2)})$. Then any admissible pair $(R_{12}, R_{21})$ must
belong to $\mathcal{R}_{2}(r)$ defined as the convex closure of rate pairs $(R_{12}, R_{21})\in \mathcal{R}^2$ such that there exists
$p(f_1, ..., f_r, x^{(1)}, x^{(2)}, y^{(1)}, y^{(2)})\in T_2(r)
$ such that \begin{eqnarray*}R_{12}&\geq& I(X^{(1)};
\textbf{F}|X^{(2)}),\\R_{21}&\geq&I(X^{(2)};
\textbf{F}|X^{(1)}),\\
R_{12}+R_{21}&\geq& I(X^{(1)};
\textbf{F}|X^{(2)})+\\&&I(X^{(2)}; \textbf{F}|
X^{(1)})+I(Y^{(1)};Y^{(2)}|X^{(1)}X^{(2)}),
\end{eqnarray*}
where $\textbf{F}=(F_1,
F_2, ..., F_r)$ and $T_2(r)$ is the set of $p(f_1, f_2, ..., f_r, x^{(1)}, x^{(2)},
y^{(1)}, y^{(2)})$ satisfying
\begin{eqnarray*}&
X^{(1)}, X^{(2)}, Y^{(1)}, Y^{(2)}\sim
q(x^{(1)}, x^{(2)}, y^{(1)}, y^{(2)}),\\&
F_i- F_{1:i-1}X^{(1)}- X^{(2)} \mbox{ if } i \mbox{ is odd},\\&
F_i- F_{1:i-1}X^{(2)}- X^{(1)} \mbox{ if } i \mbox{
is even},\\& Y^{(1)}- \textbf{F}X^{(1)}- X^{(2)}Y^{(2)},\\
&Y^{(2)}- \textbf{F}X^{(2)}- X^{(1)}Y^{(1)},\\&\forall\ 1\leq
i\leq r:\\&\ \ \ |\mathcal{F}_i|\leq
|\mathcal{F}_{i-1}||\mathcal{F}_{i-2}|\cdot\cdot\cdot
|\mathcal{F}_{1}||\mathcal{X}^{(1)}||\mathcal{Y}^{(1)}||\mathcal{X}^{(2)}||\mathcal{Y}^{(2)}|+1.
\end{eqnarray*}
\end{theorem}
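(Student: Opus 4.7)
The plan is a converse argument for interactive source coding, refined to handle the two-sided ``generation'' demand and with carefully chosen single-letter auxiliaries. First I establish two structural conditional independences that hold in any $r$-round interactive protocol with mutually independent external randomness $M_1, M_2$: by induction on the round index, $M_1 \perp X^{(2)}_{1:n} \mid C_{1:i}, X^{(1)}_{1:n}$ (and the symmetric statement), and $M_1 \perp M_2 \mid \mathbf{C}, X^{(1)}_{1:n}, X^{(2)}_{1:n}$. Both follow from the observation that, conditional on $(X^{(1)}_{1:n}, X^{(2)}_{1:n})$, the likelihood of the transcript factors into an ``odd-round'' function of $(X^{(1)}_{1:n}, M_1)$ times an ``even-round'' function of $(X^{(2)}_{1:n}, M_2)$. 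The second property immediately yields the key reduction $\widehat Y^{(1)}_{1:n} \perp \widehat Y^{(2)}_{1:n} \mid \mathbf{C}, X^{(1)}_{1:n}, X^{(2)}_{1:n}$.

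Armed with these facts, the individual bounds $nR_{12} \geq I(\mathbf{C}; X^{(1)}_{1:n} \mid X^{(2)}_{1:n})$ and $nR_{21} \geq I(\mathbf{C}; X^{(2)}_{1:n} \mid X^{(1)}_{1:n})$ follow by telescoping, because $I(C_i; X^{(1)}_{1:n} \mid C_{1:i-1}, X^{(2)}_{1:n}) = 0$ on even rounds. To recover the extra $I(Y^{(1)}; Y^{(2)} \mid X^{(1)} X^{(2)})$ term in the sum bound, I sharpen these to $nR_{12} \geq I(\mathbf{C}; X^{(1)}_{1:n} \mid X^{(2)}_{1:n}) + \sum_{i \text{ odd}} H(C_i \mid C_{1:i-1}, X^{(1)}_{1:n})$ and symmetrically for $nR_{21}$. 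Adding the two, the residual sums combine (via the structural independence of $M_1$ and $X^{(2)}_{1:n}$ given $C_{1:i-1}, X^{(1)}_{1:n}$) into $H(\mathbf{C} \mid X^{(1)}_{1:n}, X^{(2)}_{1:n})$, which by a tree-identity manipulation using $\widehat Y^{(1)}_{1:n} \perp \widehat Y^{(2)}_{1:n} \mid \mathbf{C}, X^{(1)}_{1:n}, X^{(2)}_{1:n}$ dominates $I(\widehat Y^{(1)}_{1:n}; \widehat Y^{(2)}_{1:n} \mid X^{(1)}_{1:n}, X^{(2)}_{1:n})$.

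Next I single-letterize by introducing a uniform time-sharing $T$ on $\{1, \ldots, n\}$ independent of everything, and by setting $F_{i,T} = (C_i, X^{(1)}_{1:T-1}, X^{(2)}_{T+1:n}, T)$ together with $X^{(j)} := X^{(j)}_T$ and $Y^{(j)} := \widehat Y^{(j)}_T$. The asymmetric ``past-$X^{(1)}$ and future-$X^{(2)}$'' choice for the shared context is essential: it makes the mutual-information inequalities go the right way (via a short chain-rule calculation using $(X^{(1)}_{1:t-1}, X^{(2)}_{t+1:n}) \perp (X^{(1)}_t, X^{(2)}_t)$), and it is what allows the Markov chains of $T_2(r)$ to hold. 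Finally, total-variation continuity of entropy on finite alphabets lets me replace the $\widehat Y^{(j)}$ by the target $Y^{(j)}$ as $\epsilon \to 0$; taking the convex closure absorbs any leftover time-sharing; and the cardinality bounds on $\mathcal{F}_i$ follow by applying Carath\'eodory--Fenchel--Eggleston sequentially to $F_1, F_2, \ldots, F_r$, preserving the marginal on $(X^{(1)}, X^{(2)}, Y^{(1)}, Y^{(2)})$ together with the joint distribution of the previous $F$'s.

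The main obstacle will be the Markov-chain verification in the single-letter step. For odd $i$, the target chain $F_i - F_{1:i-1} X^{(1)} - X^{(2)}$ reduces at time $t$ to $(X^{(1)}_{t+1:n}, M_1) \perp X^{(2)}_t \mid C_{1:i-1}, X^{(1)}_{1:t}, X^{(2)}_{t+1:n}$. To verify this I write the joint distribution of the unseen variables under the conditioning and exhibit a product factorization of the transcript-consistency indicator into an odd piece depending only on $(X^{(1)}_{t+1:n}, M_1)$ and an even piece depending only on $(X^{(2)}_{1:t}, M_2)$; marginalizing out $X^{(2)}_{1:t-1}$ and $M_2$ isolates $X^{(2)}_t$ on the ``even side'' and produces the required independence. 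A symmetric computation handles even $i$, and the two target-output chains $Y^{(1)} - \mathbf{F} X^{(1)} - X^{(2)} Y^{(2)}$ and $Y^{(2)} - \mathbf{F} X^{(2)} - X^{(1)} Y^{(1)}$ emerge from the same factorization, extended to include the output maps producing $\widehat Y^{(1)}, \widehat Y^{(2)}$.
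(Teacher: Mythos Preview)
Your proposal is correct and follows the same overall converse strategy as the paper: time-sharing, an asymmetric past/future split of the source sequences as the single-letterization context, verification of the interactive Markov chains by the rectangle/factorization property of transcripts, a sum-rate bound that isolates the residual entropy $H(\mathbf{C}\mid X^{(1)}_{1:n},X^{(2)}_{1:n})$ and lower-bounds it by $I(\widehat Y^{(1)}_{1:n};\widehat Y^{(2)}_{1:n}\mid X^{(1)}_{1:n},X^{(2)}_{1:n})$, entropy continuity in total variation, and sequential Carath\'eodory for the cardinality bounds.

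The one genuine packaging difference is worth recording. The paper sets $F_i=C_i$ and carries the context as a \emph{separate} random variable $Z=(J,X^{(2)}_{1:J-1},X^{(1)}_{J+1:n})$; it then proves the Markov chains of $T_2(r)$ hold \emph{conditionally} on each $Z=z$ and must therefore control, for most $z$, the distance of $p(x^{(1)},x^{(2)},\widehat y^{(1)},\widehat y^{(2)}\mid z)$ from $q$. This forces an extra lemma ($\sum_z p(z)d_z\le\epsilon$) and a Carath\'eodory-over-five-$z$-values step in which the ``bad'' $z$'s with $d_z>\sqrt\epsilon$ are discarded at the cost of a $(1-5\sqrt\epsilon)$ factor. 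Your choice to fold the context into $F_1$ (equivalently, into every $F_i$) sidesteps this entirely: the Markov chains then hold for the single aggregated law, whose $(X^{(1)},X^{(2)},\widehat Y^{(1)},\widehat Y^{(2)})$ marginal is the straight time-average and hence automatically $\epsilon$-close to $q$. After cardinality reduction you can pass to the limit directly, landing at a single point of the defining set rather than a convex combination. Your split is the mirror image of the paper's (past-$X^{(1)}$/future-$X^{(2)}$ versus past-$X^{(2)}$/future-$X^{(1)}$); both choices are valid by the symmetric i.i.d.\ structure, and your telescoping of the sum bound via $\sum_{i\ \mathrm{odd}}H(C_i\mid C_{1:i-1},X^{(1)}_{1:n})+\sum_{i\ \mathrm{even}}H(C_i\mid C_{1:i-1},X^{(2)}_{1:n})=H(\mathbf{C}\mid X^{(1)}_{1:n},X^{(2)}_{1:n})$ is exactly equivalent to the paper's $I(M_1\mathbf{C};M_2\mathbf{C}\mid X^{(1)}_{1:n},X^{(2)}_{1:n})$ computation.
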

$\bullet$ \emph{The case of no constraints on the
value of $r$:}
\begin{theorem}\label{Thm:3} Take an arbitrary $q(x^{(1)}, x^{(2)},
y^{(1)}, y^{(2)})$. Then for any $r$, any admissible pair $(R_{12},
R_{21})$ must belong to $\mathcal{R}_{3}$ defined as the convex
closure of
\begin{eqnarray*}\big\{(R_{12}, R_{21})\in \mathcal{R}^2&:& \exists
p(u, x^{(1)}, x^{(2)}, y^{(1)}, y^{(2)})\in T_3 \mbox{ s.t.
}\\R_{12}&\geq& I(X^{(1)}; U|X^{(2)}),\\R_{21}&\geq&I(X^{(2)};
U|X^{(1)}),\\
R_{12}+R_{21}&\geq& I(X^{(1)}; U|X^{(2)})+I(X^{(2)};U|
X^{(1)})\\&&+I(Y^{(1)};Y^{(2)}|X^{(1)}X^{(2)})\big\},
\end{eqnarray*}
where $T_3$ is the set of $p(u, x^{(1)}, x^{(2)}, y^{(1)}, y^{(2)})$ satisfying
\begin{eqnarray*}
\\&
X^{(1)}, X^{(2)}, Y^{(1)}, Y^{(2)}\sim q(x^{(1)}, x^{(2)}, y^{(1)},
y^{(2)}),\\& I(X^{(1)}; X^{(2)}|U)\leq I(X^{(1)};
X^{(2)}),\\& Y^{(1)}- UX^{(1)}- X^{(2)}Y^{(2)},\\
&Y^{(2)}- UX^{(2)}- X^{(1)}Y^{(1)},\\&|\mathcal{U}|\leq
|\mathcal{X}^{(1)}||\mathcal{Y}^{(1)}||\mathcal{X}^{(2)}||\mathcal{Y}^{(2)}|+1.
\end{eqnarray*}
\end{theorem}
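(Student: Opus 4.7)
The plan is to derive Theorem~\ref{Thm:3} from Theorem~\ref{Thm:2} by consolidating the entire interactive transcript into a single auxiliary random variable. Given an admissible pair $(R_{12}, R_{21})$ for some number of rounds $r$, Theorem~\ref{Thm:2} furnishes a distribution $p(f_1, \ldots, f_r, x^{(1)}, x^{(2)}, y^{(1)}, y^{(2)}) \in T_2(r)$ satisfying its three rate inequalities. I set $U := \textbf{F} = (F_1, \ldots, F_r)$. The correct marginal on $(X^{(1)}, X^{(2)}, Y^{(1)}, Y^{(2)})$ and the Markov chains $Y^{(1)}-UX^{(1)}-X^{(2)}Y^{(2)}$ and $Y^{(2)}-UX^{(2)}-X^{(1)}Y^{(1)}$ are inherited directly from $T_2(r)$, and with $U=\textbf{F}$ the right-hand sides of the three rate bounds of Theorem~\ref{Thm:3} agree term-by-term with those of Theorem~\ref{Thm:2}.

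The key new condition to verify is $I(X^{(1)}; X^{(2)} | U) \leq I(X^{(1)}; X^{(2)})$. I will prove the stronger telescoping statement
\begin{equation*}
I(X^{(1)}; X^{(2)} | F_{1:i}) \leq I(X^{(1)}; X^{(2)} | F_{1:i-1}), \qquad i=1,\ldots,r,
\end{equation*}
by induction on $i$. For odd $i$, the Markov chain $F_i - F_{1:i-1}X^{(1)} - X^{(2)}$ from $T_2(r)$ gives $I(X^{(2)}; F_i | F_{1:i-1}, X^{(1)}) = 0$; expanding $I(X^{(1)}; X^{(2)}, F_i | F_{1:i-1})$ via the chain rule in two different ways and subtracting then yields
\begin{equation*}
I(X^{(1)}; X^{(2)} | F_{1:i-1}) - I(X^{(1)}; X^{(2)} | F_{1:i}) = I(X^{(2)}; F_i | F_{1:i-1}) \geq 0.
\end{equation*}
The even-$i$ case is symmetric; telescoping from $i=1$ to $r$ yields the claim.

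The final step shrinks the alphabet of $U$ via a Fenchel--Eggleston--Carath\'eodory argument. Holding the conditionals $p(x^{(1)},x^{(2)},y^{(1)},y^{(2)} | u)$ fixed, so that both Markov chains automatically transfer to any re-weighted distribution on $U$, I seek a distribution on $U$ supported on at most $|\mathcal{X}^{(1)}||\mathcal{X}^{(2)}||\mathcal{Y}^{(1)}||\mathcal{Y}^{(2)}|+1$ values that preserves the joint marginal $p(x^{(1)},x^{(2)},y^{(1)},y^{(2)})$ (which contributes $|\mathcal{X}^{(1)}||\mathcal{X}^{(2)}||\mathcal{Y}^{(1)}||\mathcal{Y}^{(2)}|-1$ linear constraints) together with the two linear functionals $H(X^{(1)} | U X^{(2)})$ and $H(X^{(2)} | U X^{(1)})$, so that all three right-hand sides in Theorem~\ref{Thm:3} are preserved. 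The hard part will be ensuring the inequality $I(X^{(1)}; X^{(2)} | U) \leq I(X^{(1)}; X^{(2)})$ survives this reduction, since it constrains $H(X^{(1)}|U)$, which is not among the preserved quantities; I anticipate either preserving one additional linear functional such as $H(X^{(1)},X^{(2)}|U)$, or exploiting the convex closure in the definition of $\mathcal{R}_3$ to time-share between distributions that individually satisfy the constraint.
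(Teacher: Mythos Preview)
Your approach is essentially identical to the paper's: both reduce from Theorem~\ref{Thm:2} by setting $U=\textbf{F}=(F_1,\ldots,F_r)$, establish $I(X^{(1)};X^{(2)}|U)\leq I(X^{(1)};X^{(2)})$ via the same telescoping argument from the alternating Markov chains in $T_2(r)$, and then invoke a Fenchel--Carath\'eodory reduction for the cardinality bound. The paper is in fact just as brief on that last step as you are---it simply cites the generalized Carath\'eodory theorem of Fenchel without explicitly addressing how the extra inequality constraint survives the reduction---so the concern you raise is not something the paper resolves more fully than you do.
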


\section{Achievability results}
\begin{theorem}\label{Thm:Achievability1} Take an arbitrary $q(x^{(1)}, x^{(2)},
y^{(1)}, y^{(2)})$ such that $X^{(1)}=X^{(2)}$, and let $X=X^{(1)}=X^{(2)}$. Assume that random variables $F_1, F_2, ..., F_r$ are jointly distributed with $X,
Y^{(1)}, Y^{(2)}$ such that \begin{eqnarray*}&Y^{(1)}- F_rF_{r-1}...F_1X- Y^{(2)}.\end{eqnarray*} Furthermore assume that positive reals $R'_1$, $R'_2$, ...,$R'_r$ satisfy the following:
\begin{eqnarray*}\forall 1\leq s\leq r:~&\sum_{i=1}^s R'_i > I(Y^{(1)}, Y^{(2)};F_1, ..., F_s|X)
\end{eqnarray*}
Then the rate pair $(\sum_{odd~i}R'_i, \sum_{even~i}R'_i)$ is achievable with two rounds of communication, i.e. with $r=2$.
\end{theorem}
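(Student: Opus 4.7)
Since $X^{(1)} = X^{(2)} = X$, both nodes fully share the source $X^n$, so all codebook construction can be carried out identically at both nodes (as part of the protocol specification, followed by a standard derandomization to fix a good deterministic codebook). The plan is to use a nested/superposition codebook for $(F_1,\ldots,F_r)$, split the index selections between the two nodes by parity, and exchange them in one round each direction. Because of the Markov chain $Y^{(1)} - F_r F_{r-1} \cdots F_1 X - Y^{(2)}$, once both nodes hold $(X^n, F_1^n,\ldots,F_r^n)$ they can each independently sample $\widehat{Y}^{(1)n}$ and $\widehat{Y}^{(2)n}$ via the appropriate conditionals with no further communication.

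Concretely, for each $s \in \{1,\ldots,r\}$ I would generate codewords $F_s^n(j_1,\ldots,j_s)$, one for each $j_s \in [1{:}2^{nR'_s}]$, drawn i.i.d.\ from $p(F_s \mid F_{1:s-1}, X)^n$ conditionally on $X^n$ and the ancestor codewords $F_1^n(j_1),\ldots,F_{s-1}^n(j_1,\ldots,j_{s-1})$. Node 1 selects $J_s$ uniformly at random for odd $s$ using its private randomness and transmits these in round 1 at total rate $\sum_{s\text{ odd}} R'_s$; node 2 selects $J_s$ for even $s$ and transmits in round 2 at total rate $\sum_{s\text{ even}} R'_s$. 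After these two rounds both nodes know the full index tuple $(J_1,\ldots,J_r)$, identify the selected codeword $F_s^n(J_1,\ldots,J_s)$ for every $s$, and then node $i$ samples $\widehat{Y}^{(i)n}$ letter-by-letter from $p(Y^{(i)} \mid F_{1:r}, X)$ using independent private randomness.

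The heart of the proof is to show that the joint distribution of $(X^n, F_1^n,\ldots,F_r^n, \widehat{Y}^{(1)n}, \widehat{Y}^{(2)n})$ induced by this scheme is close in total variation---on average over the random codebook---to the i.i.d.\ product of the target law $q$ (this suffices because total variation is non-increasing under marginalization, and only the $(X^n, \widehat{Y}^{(1)n}, \widehat{Y}^{(2)n})$ marginal appears in the definition of admissibility). I would proceed by a cascaded soft-covering induction on $s$: at step $s$, show that the joint on $(X^n, F_{1:s}^n, \widehat{Y}^{(1)n}, \widehat{Y}^{(2)n})$ is close to its target, using the closeness at step $s-1$ together with the hypothesis $\sum_{i=1}^s R'_i > I(Y^{(1)}, Y^{(2)}; F_{1:s} \mid X)$. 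A final Markov-inequality derandomization then picks a deterministic codebook attaining the required total-variation bound.

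The main obstacle is the cascaded soft-covering lemma itself. Standard soft-covering / resolvability results---such as those underlying Cuff's theorem \cite{Cuff} and Wyner's common information \cite{Wyner}---give a single-layer bound of the form ``rate $> I(U;V)$'', whereas here we need to control total variation across $r$ nested layers whose \emph{cumulative} rates must dominate $I(Y^{(1)}, Y^{(2)}; F_{1:s} \mid X)$ for every prefix $s$. I expect to address this by telescoping the total variation across layers, applying a conditional single-layer soft-covering bound at each step, and using the mutual information chain rule $I(Y^{(1)}, Y^{(2)}; F_{1:s} \mid X) = \sum_{i=1}^{s} I(Y^{(1)}, Y^{(2)}; F_i \mid F_{1:i-1}, X)$ to match the cumulative rate hypothesis against the per-layer covering requirements; the strict inequality at every $s$ gives the slack needed to sum the per-layer error bounds.
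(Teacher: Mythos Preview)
Your coding scheme---nested superposition codebooks built on the shared $X^n$, with odd-indexed and even-indexed layer indices exchanged in one message each---is exactly the construction underlying the paper's proof, and the reduction to showing that the induced law on $(X^n,\widehat{Y}^{(1)n},\widehat{Y}^{(2)n})$ is close to i.i.d.\ in expected total variation is also the same.

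The gap is in your proposed analysis. A ``conditional single-layer soft-covering bound at each step'' asks, at layer $s$, for
\[
R'_s \;>\; I\bigl(Y^{(1)},Y^{(2)};F_s \,\big|\, F_{1:s-1},X\bigr),
\]
because conditionally on $(X^n,F_{1:s-1}^n)$ you are covering the channel output with a codebook of only $2^{nR'_s}$ codewords. The chain rule indeed gives $I(Y^{(1)},Y^{(2)};F_{1:s}\mid X)=\sum_{i\le s} I(Y^{(1)},Y^{(2)};F_i\mid F_{1:i-1},X)$, but the theorem's hypothesis is only the \emph{cumulative} inequality $\sum_{i\le s}R'_i>I(Y^{(1)},Y^{(2)};F_{1:s}\mid X)$ for every $s$; this does \emph{not} imply the per-layer inequalities your telescoping needs. (Take $r=2$, $R'_1$ large, $R'_2$ tiny, with $I(Y^{(1)},Y^{(2)};F_2\mid F_1,X)$ moderate: the cumulative conditions hold while the layer-$2$ condition fails.) So the inductive scheme as stated cannot close.

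What the paper does instead is a direct second-moment calculation on the random output measure $\hat P$. After restricting to typical sequences, it expands $\mathrm{Var}(\hat P)$ as a double sum of covariances over pairs of index tuples $(t_1,\dots,t_r)$ and $(\tilde t_1,\dots,\tilde t_r)$, and groups the pairs by the length $s$ of their shared prefix $t_{1:s}=\tilde t_{1:s}$. The combinatorics of the tree then produce, for each $s$, a contribution bounded by $2^{-n(\sum_{i\le s}R'_i - I(Y^{(1)},Y^{(2)};F_{1:s}\mid X)-O(\epsilon))}$, so the cumulative rate conditions arise \emph{naturally} from the covariance structure rather than from a per-layer decomposition. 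If you want to rescue an inductive argument, you would effectively have to prove this nested-codebook covering lemma from scratch; the paper's variance computation is the clean way to do that.
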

\emph{Discussion:} The above result is based on a generalization of Lemma 6.1 of \cite{Cuff}, which itself is a consequence of the resolvability
work of Han and Verdu in \cite{HanVerdu} and Wyner \cite{Wyner}.

\emph{Lemma 6.1 of \cite{Cuff}}: For any discrete distribution $p(f, v)$ and
each $n$, let $\mathcal{C}(n) = {F_{1:n}(t)}_{t=1}^{2^{nR}}$ be a ``codebook" of sequences each independently drawn according to $\prod_{k=1}^n
p_F(f_k)$. For a fixed codebook, define the distribution
$$Q(v_{1:n}) = 2^{-nR}\sum_{t=1}^{2^{nR}}\prod_{k=1}^n p_{V|F}(v_k|f_k(t)).$$
Then if $R > I(V ;F)$, $$\lim_{n-\infty} \mathbb{E}\bigg\|Q(v_{1:n}) -\prod_{k=1}^np_V(v_k)\bigg\|_1= 0,$$
where the expectation is with respect to the randomly
constructed codebooks $\mathcal{C}(n)$.

We now present the generalization of the above lemma that we use. To get back the above lemma, simply set $r=1$ and $X$ to be constant. Note that Theorem \ref{Thm:Achievability1} can be proved by taking $V$ to be $Y^{(1)}, Y^{(2)}$ in the following lemma.

\emph{Generalization of the lemma}: Take some arbitrary discrete joint distribution $p(f_1,f_2,f_3,...,f_r, v,x)$. For each $n$ and every sequence $x_{1:n}\in \mathcal{X}^n$, we randomly construct a ``codebook" $\mathcal{C}(x_{1:n})$ of size $2^{nR'_1+nR'_2+...+nR'_r}$ as follows: generate $2^{nR'_1}$ sequences of length $n$ independently, with each sequence distributed according to $\prod_{i=1}^nq(f_{1i}|x_i)$. Let us label these sequences by $f_{1,1:n}(t_1)$ for $1\leq t_1\leq 2^{nR'_1}$. The sequences $f_{1,1:n}(t_1)$ (for $1\leq t_1\leq 2^{nR'_1}$) will be the cloud centers for the sequences generated in the next step. For every $1\leq t_1\leq 2^{nR'_1}$, generate $2^{nR'_2}$ sequences of length $n$ using the conditional distribution of $\prod_{i=1}^nq(f_{2i}|f_{1i}(t_1),x_i)$. Let us label these sequences by $f_{2,1:n}(t_1,t_2)$ for $1\leq t_1\leq 2^{nR'_1}, 1\leq t_2\leq 2^{nR'_2}$. The sequences $f_{2,1:n}(t_1,t_2)$ (for $1\leq t_1\leq 2^{nR'_1}, 1\leq t_2\leq 2^{nR'_2}$) will be the cloud centers for the sequences generated in the next step. We continue this for $r$ steps. In the $r$-th step we generate $f_{r,1:n}(t_1,t_2,t_3,...,t_r)$ for $1\leq t_1\leq 2^{nR'_1}, 1\leq t_2\leq 2^{nR'_2}, 1\leq t_3\leq 2^{nR'_3}, ..., 1\leq t_r\leq 2^{nR'_r}$.

For a fixed codebook, define the distribution
$$Q(v_{1:n}|x_{1:n}) = 2^{-nR'_1-nR'_2-...-nR'_r}\sum_{t_1=1}^{2^{nR'_1}}\sum_{t_2=1}^{2^{nR'_2}}...\sum_{t_r=1}^{2^{nR'_r}}$$$$\prod_{k=1}^np_{V|F_1,...,F_r,X}(v_k|f_{1k}(t_1), f_{2k}(t_1, t_2),..., f_{rk}(t_1,t_2,...,t_r),x_k).$$
Then if \begin{eqnarray*}\forall 1\leq s\leq r:~&\sum_{i=1}^s R'_i > I(V;F_1, ..., F_s|X),
\end{eqnarray*} we have $$\lim_{n-\infty} \mathbb{E}\bigg\|Q(v_{1:n},x_{1:n}) -\prod_{k=1}^np_{V,X}(v_k,x_k)\bigg\|_1= 0,$$
where the expectation is with respect to the randomly
constructed codebooks $\mathcal{C}(x_{1:n})$ and i.i.d.\ sequences $X_{1:n}\sim \prod_{k=1}^np(x_k)$.

\begin{remark} For the case of $X=X^{(1)}=X^{(2)}$ and $r=1$ the above theorem implies that the rate $(R_{12},0)$ is achievable if $R_{12}>\inf_{U: Y^{(1)}- UX- Y^{(2)}} I(Y^{(1)}, Y^{(2)};U|X)$. Compare this with the converse given in Theorem \ref{Thm:1} which shows that any achievable $(R_{12},0)$ must satisfy $R_{12}\geq \inf_{U: Y^{(1)}- UX- Y^{(2)}} I(Y^{(1)}, Y^{(2)};U|X)$.
\end{remark}
\begin{remark}\label{remark1} The above theorem can be extended to scenarios where there are functions $k_1$ and $k_2$ satisfying $K=k_1(X^{(1)})=k_2(X^{(2)})$ and $X^{(1)}X^{(2)}- K- Y^{(1)}Y^{(2)}$. In this case the nodes can create $K$ and work with it to reconstruct i.i.d copies of $Y^{(1)}$ and $Y^{(2)}$ within an arbitrarily small total variational distance from the i.i.d. scenario. The constructed outputs together with $X^{(1)}, X^{(2)}$ will also have small total variational distance from the i.i.d scenario.
\end{remark}

\begin{theorem}\label{Thm:Achievability2} Take some $Z^{(1)}, Z^{(2)}$ arbitrarily distributed with $X^{(1)},X^{(2)},Y^{(1)},Y^{(2)}$. Assume that the rate pair $(R_{12}, R_{21})$ is achievable for the problem of $q(x^{(1)},x^{(2)},z^{(1)},z^{(2)})$ with $r_1$ rounds of communication, and the rate pair $(R'_{12}, R'_{21})$ is achievable for the problem of $q(x^{(1)}z^{(1)},x^{(2)}z^{(2)},y^{(1)},y^{(2)})$ with $r_2$ rounds of communication. Then the rate pair $(R_{12}+R'_{12}, R_{21}+R'_{21})$ is achievable for the problem of $q(x^{(1)},x^{(2)},y^{(1)},y^{(2)})$ with $r_1+r_2$ rounds of communication.
\end{theorem}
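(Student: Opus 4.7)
The plan is to concatenate the two given codes in the obvious way and control the error by using the contractivity of total variation distance under stochastic maps. Fix $\epsilon>0$ and split it as $\epsilon=\epsilon_1+\epsilon_2$. For all $n$ sufficiently large, the first hypothesis yields a block-length-$n$, $r_1$-round code that, on input $(X^{(1)}_{1:n},X^{(2)}_{1:n})$ and using independent external randomness $(M_1,M_2)$, produces outputs $(\widehat{Z}^{(1)}_{1:n},\widehat{Z}^{(2)}_{1:n})$ whose joint law with the inputs is within $\epsilon_1$ in total variation of $q^{\otimes n}(x^{(1)},x^{(2)},z^{(1)},z^{(2)})$, with per-letter communication rates at most $(R_{12},R_{21})$. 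The second hypothesis similarly yields a block-length-$n$, $r_2$-round code using fresh independent external randomness $(M_1',M_2')$ that, when fed genuine i.i.d.\ samples from $q(x^{(1)}z^{(1)},x^{(2)}z^{(2)})$, produces $(\widehat{Y}^{(1)}_{1:n},\widehat{Y}^{(2)}_{1:n})$ whose joint law with the inputs is within $\epsilon_2$ of $q^{\otimes n}(x^{(1)},x^{(2)},z^{(1)},z^{(2)},y^{(1)},y^{(2)})$, at rates $(R'_{12},R'_{21})$.

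The combined protocol runs the first code and then invokes the second code, treating $(X^{(1)}_{1:n},\widehat{Z}^{(1)}_{1:n})$ as node~1's observation and $(X^{(2)}_{1:n},\widehat{Z}^{(2)}_{1:n})$ as node~2's. Since $(M_1,M_2,M_1',M_2')$ are mutually independent of one another and of $(X^{(1)}_{1:n},X^{(2)}_{1:n})$, the $r_2$-round second stage, together with its external randomness, may be viewed as a single Markov kernel $K$ from the pair of observations to $(\widehat{Y}^{(1)}_{1:n},\widehat{Y}^{(2)}_{1:n})$ whose output we augment to include its input. Applied to genuine i.i.d.\ samples this kernel, by hypothesis, produces a joint law within $\epsilon_2$ of $q^{\otimes n}(x^{(1)},x^{(2)},z^{(1)},z^{(2)},y^{(1)},y^{(2)})$; applied instead to the output of the first stage, which is within $\epsilon_1$ in total variation of such i.i.d.\ samples, it produces a joint law within $\epsilon_1$ of the former, by the data-processing inequality for total variation. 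The triangle inequality therefore gives total variation $\leq\epsilon_1+\epsilon_2=\epsilon$ on the full joint, and marginalizing out $(\widehat{Z}^{(1)}_{1:n},\widehat{Z}^{(2)}_{1:n})$ only decreases this distance, yielding the required bound on $(X^{(1)}_{1:n},X^{(2)}_{1:n},\widehat{Y}^{(1)}_{1:n},\widehat{Y}^{(2)}_{1:n})$.

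For rate and round accounting, every message of the combined protocol belongs to exactly one of the two constituent codes, so the normalized entropies of the messages sent from node~1 to node~2 sum to at most $R_{12}+R'_{12}$, and symmetrically at most $R_{21}+R'_{21}$ in the reverse direction. The total number of rounds is $r_1+r_2$; if the convention that odd-indexed messages originate at node~1 and even-indexed ones at node~2 forces a parity mismatch at the junction between the two stages, a single zero-rate dummy message can be inserted there, which affects neither rates nor asymptotic round count in any essential way. The main point of care, not really an obstacle, is the clean invocation of data processing: one must verify that the second-stage code truly depends on its observations only through what is handed to it and on its own fresh randomness, so that feeding it a distribution $\epsilon_1$-close in total variation to the one it was designed for shifts its output by at most $\epsilon_1$ in total variation.
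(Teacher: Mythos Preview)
Your argument is essentially the paper's: concatenate the two codes and control the total variation via the triangle inequality, with your ``Markov kernel plus data-processing'' phrasing being a clean repackaging of the paper's explicit multiply-and-sum manipulation. One point needs correction, however: your handling of the round count. When $r_1$ is odd, the last message of the first code and the first message of the second code both originate at node~1, so inserting a dummy message from node~2 between them would produce a protocol with $r_1+r_2+1$ rounds, not $r_1+r_2$; your remark that this ``affects neither rates nor asymptotic round count in any essential way'' does not rescue the stated bound, since the theorem asserts achievability with exactly $r_1+r_2$ rounds. The paper instead \emph{merges} the two consecutive same-direction messages into a single round, yielding $r_1+r_2-1\le r_1+r_2$ rounds in that case (and exactly $r_1+r_2$ when $r_1$ is even). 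With this small fix your proof is complete and matches the paper's.
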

\begin{observation} If $H(Y^{(1)}|X^{(2)})=0$ and $H(Y^{(2)}|X^{(1)})=0$, then the set of $(R_{12}, R_{21})$ satisfying $R_{21}> H(Y^{(1)}|X^{(1)})$, $R_{12}> H(Y^{(2)}|X^{(2)})$ is achievable using Slepian-Wolf binning with two rounds of communication, i.e. with $r=2$. Small probability of block recovery error implies a small total variational distance.
\end{observation}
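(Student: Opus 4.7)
The plan is to exploit the two determinism hypotheses to reduce the coordination problem to a pair of Slepian--Wolf source-coding problems with side information, and then argue that small block-recovery error translates into small total variation distance because the ``target'' tuple is a deterministic function of $(X^{(1)}_{1:n},X^{(2)}_{1:n})$.

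Setup. The hypothesis $H(Y^{(1)}|X^{(2)})=0$ says that $Y^{(1)}=f(X^{(2)})$ for some deterministic $f$, and similarly $Y^{(2)}=g(X^{(1)})$. First I would let each node compute the target sequence it ``owns'': Node~$2$ forms $Y^{(1)}_{1:n}:=(f(X^{(2)}_i))_{i=1}^n$ and Node~$1$ forms $Y^{(2)}_{1:n}:=(g(X^{(1)}_i))_{i=1}^n$. Note that, with this construction, the joint tuple $(X^{(1)}_{1:n},X^{(2)}_{1:n},Y^{(1)}_{1:n},Y^{(2)}_{1:n})$ is \emph{exactly} i.i.d.\ $\sim q$; no randomization has yet been introduced.

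The coordination task now reduces to a two-way almost-lossless source communication: Node~$1$ must reproduce $Y^{(1)}_{1:n}$, for which it already has the correlated side information $X^{(1)}_{1:n}$; symmetrically for Node~$2$. I would then use standard Slepian--Wolf binning in each direction. In round~$1$, Node~$1$ sends to Node~$2$ the bin index of $Y^{(2)}_{1:n}$ under a random binning at rate $R_{12}>H(Y^{(2)}|X^{(2)})$; in round~$2$, Node~$2$ sends to Node~$1$ the bin index of $Y^{(1)}_{1:n}$ at rate $R_{21}>H(Y^{(1)}|X^{(1)})$. (The two rounds are actually non-interactive since neither depends on the other, but $r=2$ suffices and matches the statement.) By the standard Slepian--Wolf argument, the joint probability $p_e$ that either decoder makes a block error tends to $0$ as $n\to\infty$.

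The one genuinely non-routine step is the final sentence: translating a vanishing error probability into a vanishing total variation distance, as required by the definition in Section~\ref{sec:definition}. Let $P$ denote the distribution of $(X^{(1)}_{1:n},X^{(2)}_{1:n},\widehat{Y}^{(1)}_{1:n},\widehat{Y}^{(2)}_{1:n})$ produced by the protocol and $Q$ the i.i.d.\ target. Let $E$ be the event that both decoders recover correctly, so $\Pr[E^c]\le p_e$. On $E$ the reconstructed tuple equals $(X^{(1)}_{1:n},X^{(2)}_{1:n},Y^{(1)}_{1:n},Y^{(2)}_{1:n})$, which is distributed exactly as $Q$ by the deterministic construction above. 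Writing for each outcome $a$
\begin{align*}
|P(a)-Q(a)| \le \Pr[\text{protocol tuple}=a,\,E^c]+\Pr[\text{target tuple}=a,\,E^c],
\end{align*}
and summing over $a$, one gets $\sum_a |P(a)-Q(a)|\le 2\Pr[E^c]\le 2p_e$, which is arbitrarily small for large $n$. This is the main (and essentially the only) obstacle, and it is handled by the observation that the ``correct'' realization of the target $Y$'s is already built into the construction, so only the decoding errors contribute to the TV discrepancy.
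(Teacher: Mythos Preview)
Your proposal is correct and matches the paper's intent exactly. The paper states this as an Observation without a separate proof, the full argument being implicit in the two sentences of the statement itself; you have simply written out the details of the Slepian--Wolf step and supplied a clean coupling argument for the claim that small block-error probability forces small total variation, which is precisely the content of the last sentence of the Observation.
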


\begin{corollary}\label{corrolary1} The above observation together with Theorems \ref{Thm:Achievability1} and \ref{Thm:Achievability2} imply an achievable rate region for the general problem of $q(x^{(1)},x^{(2)},y^{(1)},y^{(2)})$. The two nodes can first exchange $X^{(1)}$ and $X^{(2)}$, by using up rates $H(X^{(2)}|X^{(1)})$ and $H(X^{(1)}|X^{(2)})$. One can then apply Theorem \ref{Thm:Achievability1} for the reconstruction of $Y^{(1)}$ and $Y^{(2)}$.
\end{corollary}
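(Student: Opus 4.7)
The plan is to realize the coordination task in two sequential stages and then glue them together via Theorem~\ref{Thm:Achievability2}. In Stage~1 the nodes exchange their sources so that after the exchange each side holds (to within arbitrarily small total variation) the full pair $(X^{(1)},X^{(2)})$. In Stage~2, both nodes now share this same ``effective source,'' and Theorem~\ref{Thm:Achievability1} is used to generate $Y^{(1)}$ at node~1 and $Y^{(2)}$ at node~2 with the correct joint distribution.

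For Stage~1 I would instantiate the preceding Observation with $Z^{(1)}=X^{(2)}$ and $Z^{(2)}=X^{(1)}$. The hypotheses $H(Z^{(1)}|X^{(2)})=0$ and $H(Z^{(2)}|X^{(1)})=0$ hold trivially, so Slepian-Wolf coding in $r_1=2$ rounds achieves any rate pair with $R_{12}>H(X^{(1)}|X^{(2)})$ and $R_{21}>H(X^{(2)}|X^{(1)})$, with vanishing block error probability upgrading to vanishing total variation exactly as stated in the Observation.

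For Stage~2 I consider the residual coordination problem in which node~1 observes $(X^{(1)},Z^{(1)})=(X^{(1)},X^{(2)})$ and node~2 observes $(X^{(2)},Z^{(2)})=(X^{(1)},X^{(2)})$, with target outputs $Y^{(1)},Y^{(2)}$. The two effective sources are identical, so Theorem~\ref{Thm:Achievability1} applies directly (equivalently, Remark~\ref{remark1} with $K=(X^{(1)},X^{(2)})$): for any auxiliary variables $F_1,\dots,F_r$ satisfying $Y^{(1)}-F_{1:r}X^{(1)}X^{(2)}-Y^{(2)}$ and any positive rates with $\sum_{i=1}^{s}R'_i>I(Y^{(1)},Y^{(2)};F_{1:s}|X^{(1)},X^{(2)})$ for every $1\leq s\leq r$, the pair $\bigl(\sum_{\text{odd } i}R'_i,\sum_{\text{even } i}R'_i\bigr)$ is achievable in $r_2=r$ rounds.

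Theorem~\ref{Thm:Achievability2} then concatenates the two stages, yielding achievability of the rate pair $\bigl(H(X^{(1)}|X^{(2)})+\sum_{\text{odd } i}R'_i+\epsilon,\; H(X^{(2)}|X^{(1)})+\sum_{\text{even } i}R'_i+\epsilon\bigr)$ in $r+2$ rounds; taking the union over admissible choices of $F_1,\dots,F_r$ and $R'_i$ produces the claimed achievable region. The one point needing real care is the total-variation bookkeeping: Stage~2 was stated for exact i.i.d.\ sources, while after Stage~1 the shared pair $(X^{(1)},X^{(2)})$ is only close in total variation to the target i.i.d.\ distribution. This is precisely what Theorem~\ref{Thm:Achievability2} is designed to absorb, using that total variation is non-expansive under stochastic processing together with the triangle inequality, so no separate argument is required.
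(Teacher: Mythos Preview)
Your proposal is correct and follows exactly the approach the paper sketches in the corollary statement itself (the paper offers no separate proof beyond that sketch): instantiate the Observation with $Z^{(1)}=X^{(2)}$, $Z^{(2)}=X^{(1)}$ for Stage~1, apply Theorem~\ref{Thm:Achievability1} with common source $X=(X^{(1)},X^{(2)})$ for Stage~2, and concatenate via Theorem~\ref{Thm:Achievability2}. One small slip: Theorem~\ref{Thm:Achievability1} actually delivers Stage~2 in $r_2=2$ rounds regardless of the number of auxiliaries $F_1,\dots,F_r$, not $r_2=r$ rounds as you wrote, so the total round count is $4$ rather than $r+2$; this only strengthens the conclusion.
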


\section{Proofs}\label{Section:Proofs}
\begin{proof}[Proof of Theorem \ref{Thm:1}]
Take an arbitrary $q(x^{(1)}, x^{(2)}, y^{(1)}, y^{(2)})$. Since
$(R_{12}, 0)$ is admissible with one-way communication from the
first node to the second node, for any positive $\epsilon$, a code
of length $n$ exists such that the communication rates are less than
or equal to $(R_{12}, 0)$ and the total variation distance between the overall
reconstruction and the distribution of $n$ i.i.d copies of
$(X^{(1)}, X^{(2)}, Y^{(1)}, Y^{(2)})$ is less than or equal to
$\epsilon$. In other words for some natural number $n$, the two
nodes observe $n$ i.i.d. copies of $X^{(1)}, X^{(2)}$, i.e.
$X^{(1)}_{1:n}, X^{(2)}_{1:n}$ respectively. Assume that the first node is using
the external randomness $M_1$, and the second node is using the
external randomness $M_2$. $M_1$, $M_2$ and
$X^{(1)}_{1:n}X^{(2)}_{1:n}$ must be mutually independent. Let $C$
denote the message sent from the first node to the second node. We
have $H(C|X^{(1)}_{1:n}M_1)=0$ and
$$\frac{1}{n}H(C)\leq R_{12}.$$
Following the communication, the first node creates
$\widehat{Y}^{(1)}_{1:n}$, and the second node creates
$\widehat{Y}^{(2)}_{1:n}$ such that the total variation distance
between $p(X^{(1)}_{1:n}, X^{(2)}_{1:n}, \widehat{Y}^{(1)}_{1:n},
\widehat{Y}^{(2)}_{1:n})$ and the distribution of $(X_{1:n}^{(1)},
X_{1:n}^{(2)}, Y_{1:n}^{(1)}, Y_{1:n}^{(2)})$, constructed by taking
$n$ i.i.d copies of $q(x^{(1)}, x^{(2)}, y^{(1)}, y^{(2)})$, is less
than or equal to $\epsilon$.

The proof generalizes the one given by Paul Cuff in \cite{Cuff}.
Take a random variable $J$ uniform on $\{1,2,3,...,n\}$ and
independent of all the above mentioned random variables. Let
\begin{eqnarray*}&U=JCX^{(2)}_{1:J-1}X^{(2)}_{J+1:n},\\&
X^{(1)}=X^{(1)}_J, X^{(2)}=X^{(2)}_J,
\widehat{Y}^{(1)}=\widehat{Y}^{(1)}_J,
\widehat{Y}^{(2)}=\widehat{Y}^{(2)}_J\end{eqnarray*} Note that the
total variation distance between the joint distribution of
$(X^{(1)}, X^{(2)}, \widehat{Y}^{(1)}, \widehat{Y}^{(2)})$ and that
of $(X^{(1)}, X^{(2)}, Y^{(1)}, Y^{(2)})$ distributed according to
$q(x^{(1)}, x^{(2)}, y^{(1)}, y^{(2)})$ is less than or equal to
$\epsilon$. This is true since for any $1\leq j\leq n$, the total
variation distance between the joint distribution of $(X_j^{(1)},
X_j^{(2)}, \widehat{Y}_j^{(1)}, \widehat{Y}_j^{(2)})$ and the joint
distribution of $(X_j^{(1)}, X_j^{(2)}, Y_j^{(1)}, Y_j^{(2)})$ is
less than or equal to $\epsilon$.

We further have
$$\widehat{Y}^{(2)}-UX^{(2)}-X^{(1)}\widehat{Y}^{(1)}$$
since
\begin{eqnarray}&0=I(\widehat{Y}^{(2)}_{1:n};X^{(1)}_{1:n}\widehat{Y}^{(1)}_{1:n}M_1|CM_2X^{(2)}_{1:n})=\label{eqn:1a}\\&
I(\widehat{Y}^{(2)}_{1:n}M_2;X^{(1)}_{1:n}\widehat{Y}^{(1)}_{1:n}M_1|CX^{(2)}_{1:n})-I(M_2;X^{(1)}_{1:n}\widehat{Y}^{(1)}_{1:n}M_1|CX^{(2)}_{1:n})\geq
\nonumber\\&I(\widehat{Y}^{(2)}_{1:n}M_2;X^{(1)}_{1:n}\widehat{Y}^{(1)}_{1:n}M_1|CX^{(2)}_{1:n})-I(M_2;X^{(1)}_{1:n}\widehat{Y}^{(1)}_{1:n}M_1C|X^{(2)}_{1:n})=
\label{eqn:2a}\\&I(\widehat{Y}^{(2)}_{1:n}M_2;X^{(1)}_{1:n}\widehat{Y}^{(1)}_{1:n}M_1|CX^{(2)}_{1:n})-I(M_2;X^{(1)}_{1:n}M_1|X^{(2)}_{1:n})=
I(\widehat{Y}^{(2)}_{1:n}M_2;X^{(1)}_{1:n}\widehat{Y}^{(1)}_{1:n}M_1|CX^{(2)}_{1:n})\geq\nonumber\\&
I(\widehat{Y}^{(2)}_{j};X^{(1)}_{j}\widehat{Y}^{(1)}_{j}|CX^{(2)}_{1:j-1}X^{(2)}_{j+1:n}X^{(2)}_{j}),\
\ \ \ \ \ \ \ \ \forall\ 1\leq j\leq n,\nonumber
\end{eqnarray}
where equations \ref{eqn:1a} and \ref{eqn:2a} hold because
$H(\widehat{Y}^{(2)}_{1:n}|CM_2X^{(2)}_{1:n})=0$, and
$H(C\widehat{Y}^{(1)}_{1:n}|M_1X^{(1)}_{1:n})=0$. Next, note that
\begin{eqnarray*}&R_{12}\geq \frac{1}{n}H(C)\geq \frac{1}{n}I(C;X^{(1)}_{1:n}\widehat{Y}^{(1)}_{1:n}\widehat{Y}^{(2)}_{1:n}|X^{(2)}_{1:n})
=\frac{1}{n}H(X^{(1)}_{1:n}\widehat{Y}^{(1)}_{1:n}\widehat{Y}^{(2)}_{1:n}|X^{(2)}_{1:n})
-\frac{1}{n}H(X^{(1)}_{1:n}\widehat{Y}^{(1)}_{1:n}\widehat{Y}^{(2)}_{1:n}|X^{(2)}_{1:n}C)
\end{eqnarray*}
The first term $\frac{1}{n}H(X^{(1)}_{1:n}\widehat{Y}^{(1)}_{1:n}\widehat{Y}^{(2)}_{1:n}|X^{(2)}_{1:n})$ is equal to
$-\frac{1}{n}H(X^{(2)}_{1:n})+\frac{1}{n}H(X^{(1)}_{1:n}\widehat{Y}^{(1)}_{1:n}\widehat{Y}^{(2)}_{1:n}X^{(2)}_{1:n})$.
Using Csisz\'{a}r-K\"{o}rner inequality \cite[Lemma
2.7]{CsiszárKörner} on $\frac{1}{n}H(X^{(1)}_{1:n}\widehat{Y}^{(1)}_{1:n}\widehat{Y}^{(2)}_{1:n}X^{(2)}_{1:n})$, we can bound this term, i.e.
$\frac{1}{n}H(X^{(1)}_{1:n}\widehat{Y}^{(1)}_{1:n}\widehat{Y}^{(2)}_{1:n}|X^{(2)}_{1:n})$,
from below by
\begin{eqnarray*}&-\frac{1}{n}H(X^{(2)}_{1:n})+\frac{1}{n}H(X^{(1)}_{1:n}Y^{(1)}_{1:n}Y^{(2)}_{1:n}X^{(2)}_{1:n})+\frac{\epsilon}{n}\log(\frac{\epsilon}{(|\mathcal{X}^{(1)}||\mathcal{X}^{(2)}||\mathcal{\widehat{Y}}^{(1)}||\mathcal{\widehat{Y}}^{(2)}|)^n})
\\&=H(X^{(1)}Y^{(1)}Y^{(2)}|X^{(2)})+\frac{\epsilon}{n}\log(\frac{\epsilon}{(|\mathcal{X}^{(1)}||\mathcal{X}^{(2)}||\mathcal{\widehat{Y}}^{(1)}||\mathcal{\widehat{Y}}^{(2)}|)^n})\\&\geq
H(X^{(1)}Y^{(1)}Y^{(2)}|X^{(2)})+\epsilon\log(\frac{\epsilon}{|\mathcal{X}^{(1)}||\mathcal{X}^{(2)}||\mathcal{\widehat{Y}}^{(1)}||\mathcal{\widehat{Y}}^{(2)}|})\\&\geq
H(X^{(1)}\widehat{Y}^{(1)}\widehat{Y}^{(2)}|X^{(2)})+2\epsilon\log(\frac{\epsilon}{|\mathcal{X}^{(1)}||\mathcal{X}^{(2)}||\mathcal{\widehat{Y}}^{(1)}||\mathcal{\widehat{Y}}^{(2)}|})
=H(X^{(1)}\widehat{Y}^{(1)}\widehat{Y}^{(2)}|X^{(2)})-2\kappa(\epsilon)
\end{eqnarray*}
where
$\kappa(\epsilon)=-\epsilon\log(\frac{\epsilon}{|\mathcal{X}^{(1)}||\mathcal{X}^{(2)}||\mathcal{\widehat{Y}}^{(1)}||\mathcal{\widehat{Y}}^{(2)}|})$.
Here we used the fact (discussed above) that the
total variation distance between the joint distribution of
$(X^{(1)}, X^{(2)}, \widehat{Y}^{(1)}, \widehat{Y}^{(2)})$ and that
of $(X^{(1)}, X^{(2)}, Y^{(1)}, Y^{(2)})$ is less than or equal to
$\epsilon$.
We bound the second term, i.e.
$-\frac{1}{n}H(X^{(1)}_{1:n}\widehat{Y}^{(1)}_{1:n}\widehat{Y}^{(2)}_{1:n}|X^{(2)}_{1:n}C)$,
from below as follows:
\begin{eqnarray*}&-\frac{1}{n}H(X^{(1)}_{1:n}\widehat{Y}^{(1)}_{1:n}\widehat{Y}^{(2)}_{1:n}|X^{(2)}_{1:n}C)=
-\frac{1}{n}\sum_{j=1}^nH(X^{(1)}_{j}\widehat{Y}^{(1)}_{j}\widehat{Y}^{(2)}_{j}|X^{(2)}_{1:n}CX^{(1)}_{1:j-1}\widehat{Y}^{(1)}_{1:j-1}\widehat{Y}^{(2)}_{1:j-1})\geq\\&
-\frac{1}{n}\sum_{j=1}^nH(X^{(1)}_{j}\widehat{Y}^{(1)}_{j}\widehat{Y}^{(2)}_{j}|X^{(2)}_{j}X^{(2)}_{1:j-1}X^{(2)}_{j+1:n}C)=H(X^{(1)}\widehat{Y}^{(1)}\widehat{Y}^{(2)}|X^{(2)}U)\end{eqnarray*}
Therefore
\begin{eqnarray*}&R_{12}\geq
I(X^{(1)}\widehat{Y}^{(1)}\widehat{Y}^{(2)};U|X^{(2)})-2\kappa(\epsilon)\end{eqnarray*}
Hence, $(R_{12},0)$ belongs to $\mathcal{R}^{\epsilon}_{1}$
defined as follows:
\begin{eqnarray*}\big\{(R_{12},0)\in \mathcal{R}^2&:& \exists
p(u, x^{(1)}, x^{(2)}, \widehat{y}^{(1)}, \widehat{y}^{(2)})\in
T^{\epsilon}_1 \mbox{ s.t. }\\R_{12}&\geq&
I(U;X^{(1)}\widehat{Y}^{(1)}\widehat{Y}^{(2)}|X^{(2)})-2\kappa(\epsilon)\big\},
\end{eqnarray*}
where
\begin{eqnarray*}
T^{\epsilon}_1\triangleq\big\{p(u, x^{(1)}, x^{(2)},
\widehat{y}^{(1)}, \widehat{y}^{(2)})&:& \|p(x^{(1)}, x^{(2)},
\widehat{y}^{(1)}, \widehat{y}^{(2)})-q(x^{(1)}, x^{(2)},
\widehat{y}^{(1)}, \widehat{y}^{(2)})\|_1\leq \epsilon\\&&
\widehat{Y}^{(2)}-UX^{(2)}-X^{(1)}\widehat{Y}^{(1)}\\&&|\mathcal{U}|\leq
|\mathcal{X}^{(1)}||\mathcal{\widehat{Y}}^{(1)}||\mathcal{X}^{(2)}||\mathcal{\widehat{Y}}^{(2)}|+1\big\},
\end{eqnarray*}
since a cardinality bound on $|\mathcal{U}|$ can be imposed using
the generalized Carath\'{e}odory theorem of Fenchel. 

The proof will be done by noting that
$\bigcap_{\epsilon>0}\mathcal{R}^{\epsilon}_{1}=\mathcal{R}_{1}$.
In order to show the latter, note that $\mathcal{R}_{1}\subset
\bigcap_{\epsilon>0}\mathcal{R}^{\epsilon}_{1}$ by definition. In
order to show
$\bigcap_{\epsilon>0}\mathcal{R}^{\epsilon}_{1}\subset
\mathcal{R}_{1}$, we take a sequence of $\epsilon_1$,
$\epsilon_2$,... where $\lim_{i\rightarrow \infty}\epsilon_i=0$.
Take a point in $\bigcap_{i\geq 1}\mathcal{R}^{\epsilon_i}_{1}$.
Corresponding to this point are $p_i(u, x^{(1)}, x^{(2)},
\widehat{y}^{(1)}, \widehat{y}^{(2)})$ in $T^{\epsilon_i}_1$. Since
we have cardinality bounds on the alphabet of the random variables
involved, one can think of these probability distributions as
vectors in the probability simplex of
$\mathcal{R}^{|\mathcal{U}||\mathcal{X}^{(1)}||\mathcal{\widehat{Y}}^{(1)}||\mathcal{X}^{(2)}||\mathcal{\widehat{Y}}^{(2)}|}$.
Since the probability simplex is a compact set (when viewed as a
subset of the Euclidean space), there must exist a subsequence
$i_1, i_2, ...$ where the sequence $p_{i_k}(u, x^{(1)}, x^{(2)},
\widehat{y}^{(1)}, \widehat{y}^{(2)})$ for $k=1,2,3,...$ converges
to some $p^*(u, x^{(1)}, x^{(2)}, \widehat{y}^{(1)},
\widehat{y}^{(2)})$. $p^*(u, x^{(1)}, x^{(2)}, \widehat{y}^{(1)},
\widehat{y}^{(2)})$ must belong to $T_1$ using the fact that mutual
information function and the total variational distance are
continuous in the underlying joint distribution. One can further
observe that the point in $\bigcap_{i\geq
1}\mathcal{R}^{\epsilon_i}_{1}$ that we started with, also belongs
to $\mathcal{R}_{1}$ for the choice of $p^*(u, x^{(1)}, x^{(2)},
\widehat{y}^{(1)}, \widehat{y}^{(2)})$ since $\lim_{i\rightarrow
\infty}2\kappa(\epsilon_i)$ is zero.
\end{proof}

\begin{proof}[Proof of Theorem \ref{Thm:2}] Take an arbitrary $q(x^{(1)}, x^{(2)}, y^{(1)}, y^{(2)})$. Since
$(R_{12}, R_{21})$ is admissible with $r$ rounds of communication,
for any positive $\epsilon$, a code of length $n$ exists such that
the communication rates are less than or equal to $(R_{12}, R_{21})$
and the total variation distance between the overall reconstruction and the
distribution of $n$ i.i.d copies of $(X^{(1)}, X^{(2)}, Y^{(1)},
Y^{(2)})$ is less than or equal to $\epsilon$. In other words for
some natural number $n$, the two nodes observe $n$ i.i.d. copies
of $X^{(1)}, X^{(2)}$, i.e. $X^{(1)}_{1:n}, X^{(2)}_{1:n}$ respectively. Assume
that the first node is using the external randomness $M_1$, and the
second node is using the external randomness $M_2$. $M_1$, $M_2$
and $X^{(1)}_{1:n}X^{(2)}_{1:n}$ must be mutually independent. Let
$C_1, C_2, ..., C_r$ denote the interactive communication used in
the code. We have $H(C_i|C_{1:i-1}X^{(1)}_{1:n}M_1)=0$ if $i$ is
odd, $H(C_i|C_{1:i-1}X^{(2)}_{1:n}M_2)=0$ if $i$ is even. We further
have
$$\frac{1}{n}\sum_{i: odd}H(C_i)\leq R_{12}$$
and
$$\frac{1}{n}\sum_{i: even}H(C_i)\leq R_{21}.$$
Following the communication, the first node creates
$\widehat{Y}^{(1)}_{1:n}$, and the second node creates
$\widehat{Y}^{(2)}_{1:n}$ such that the total variation distance
between $p(X^{(1)}_{1:n}, X^{(2)}_{1:n}, \widehat{Y}^{(1)}_{1:n},
\widehat{Y}^{(2)}_{1:n})$ and the distribution of $(X_{1:n}^{(1)},
X_{1:n}^{(2)}, Y_{1:n}^{(1)}, Y_{1:n}^{(2)})$, constructed by taking
$n$ i.i.d copies of $q(x^{(1)}, x^{(2)}, y^{(1)}, y^{(2)})$, is less
than or equal to $\epsilon$.

Take a random variable $J$ uniform on $\{1,2,3,...,n\}$ and
independent of all the above mentioned random variables. Let
\begin{eqnarray*}&Z=JX^{(2)}_{1:J-1}X^{(1)}_{J+1:n},\\&F_i=C_i\ (\mbox{for } i=1,2,3,...,r) and \overrightarrow{F}=(F_1, F_2, ..., F_r),\\&
X^{(1)}=X^{(1)}_J, X^{(2)}=X^{(2)}_J,
\widehat{Y}^{(1)}=\widehat{Y}^{(1)}_J,
\widehat{Y}^{(2)}=\widehat{Y}^{(2)}_J\end{eqnarray*}

The following statements hold:
\begin{enumerate}
  \item $I(\widehat{Y}^{(1)};X^{(2)}\widehat{Y}^{(2)}|X^{(1)}\overrightarrow{F}Z)=0$
  since, for any $1\leq j\leq n$, we have:
\begin{eqnarray*}&0=I(X^{(1)}_{1:n}M_1;X^{(2)}_{1:n}M_2|X^{(1)}_{j:n}X^{(2)}_{1:j-1})=I(X^{(1)}_{1:n}M_1F_1;X^{(2)}_{1:n}M_2|X^{(1)}_{j:n}X^{(2)}_{1:j-1})\geq\\&
I(X^{(1)}_{1:n}M_1;X^{(2)}_{1:n}M_2|X^{(1)}_{j:n}X^{(2)}_{1:j-1}F_1)=I(X^{(1)}_{1:n}M_1;X^{(2)}_{1:n}M_2F_2|X^{(1)}_{j:n}X^{(2)}_{1:j-1}F_1)\geq\\&
I(X^{(1)}_{1:n}M_1;X^{(2)}_{1:n}M_2|X^{(1)}_{j:n}X^{(2)}_{1:j-1}F_{1:2})\geq
\cdot\cdot\cdot \geq
I(X^{(1)}_{1:n}M_1;X^{(2)}_{1:n}M_2|X^{(1)}_{j:n}X^{(2)}_{1:j-1}\overrightarrow{F})\geq\\&
I(\widehat{Y}^{(1)}_{1:n};X^{(2)}_{1:n}\widehat{Y}^{(2)}_{1:n}|X^{(1)}_{j:n}X^{(2)}_{1:j-1}\overrightarrow{F})\geq
I(\widehat{Y}^{(1)}_{j};X^{(2)}_{j}\widehat{Y}^{(2)}_{j}|X^{(1)}_{j}X^{(1)}_{j+1:n}X^{(2)}_{1:j-1}\overrightarrow{F}).
\end{eqnarray*}
  Therefore
  $I(\widehat{Y}^{(1)};X^{(2)}\widehat{Y}^{(2)}|X^{(1)}\overrightarrow{F}Z)=0$.
  Note that this equation implies that for any value of $Z=z$,
  $I(\widehat{Y}^{(1)};X^{(2)}\widehat{Y}^{(2)}|X^{(1)}\overrightarrow{F},Z=z)=0$.
  \item Similarly one can show that
  $I(\widehat{Y}^{(2)};X^{(1)}\widehat{Y}^{(1)}|X^{(2)}\overrightarrow{F}Z)=0$.
  Note that this equation implies that for any value of $Z=z$,
  $I(\widehat{Y}^{(2)};X^{(1)}\widehat{Y}^{(1)}|X^{(2)}\overrightarrow{F},Z=z)=0$.
  \item $I(F_i;X^{(2)}|F_{1:i-1}X^{(1)}Z)=0$ if $i$ is odd, since following
  the beginning steps of the proof for the first statement, one can
  show that for any $1\leq j\leq n$ the following inequality holds:
\begin{eqnarray*}&0\geq I(X^{(1)}_{1:n}M_1;X^{(2)}_{1:n}M_2|X^{(1)}_{j:n}X^{(2)}_{1:j-1}F_{1:i-1})\end{eqnarray*}
  Next, note that
\begin{eqnarray*}&I(X^{(1)}_{1:n}M_1;X^{(2)}_{1:n}M_2|X^{(1)}_{j:n}X^{(2)}_{1:j-1}F_{1:i-1})\geq
I(F_i;X^{(2)}_{1:n}|X^{(1)}_{j:n}X^{(2)}_{1:j-1}F_{1:i-1})\geq\\&
I(F_i;X^{(2)}_{j}|X^{(1)}_{j:n}X^{(2)}_{1:j-1}F_{1:i-1})=I(F_i;X^{(2)}_{j}|X^{(1)}_jX^{(1)}_{j+1:n}X^{(2)}_{1:j-1}F_{1:i-1}).
\end{eqnarray*}
Therefore $I(F_i;X^{(2)}|F_{1:i-1}X^{(1)}Z)=0$ if $i$ is odd. Note
that this equation implies that for any value of $Z=z$,
$I(F_i;X^{(2)}|F_{1:i-1}X^{(1)}, Z=z)=0$ if $i$ is odd.
  \item Similarly one can show that $I(F_i;X^{(1)}|F_{1:i-1}X^{(2)}Z)=0$ if $i$ is
  even. Note that this equation implies that for any value of $Z=z$,
  $I(F_i;X^{(1)}|F_{1:i-1}X^{(2)},Z=z)=0$ if $i$ is
  even.
\end{enumerate}
These statements imply that conditioned on any $Z=z$, the
conditional distribution \begin{eqnarray*}&p(f_1, f_2, ..., f_r,
x^{(1)}, x^{(2)}, \widehat{y}^{(1)},
\widehat{y}^{(2)}|z)\end{eqnarray*} satisfies following Markov
chains equations:
\begin{eqnarray*}&
F_i-F_{1:i-1}X^{(1)}-X^{(2)} \mbox{ if } i \mbox{ is odd},\\&
F_i-F_{1:i-1}X^{(2)}-X^{(1)} \mbox{ if } i \mbox{
is even}\\& \widehat{Y}^{(1)}-\overrightarrow{F}X^{(1)}-X^{(2)}\widehat{Y}^{(2)}\\
&\widehat{Y}^{(2)}-\overrightarrow{F}X^{(2)}-X^{(1)}\widehat{Y}^{(1)}
\end{eqnarray*}

Let $d_z$ denote the total variation distance between
$p(X^{(1)}=x^{(1)}, X^{(2)}=x^{(2)}, \widehat{Y}^{(1)}=y^{(1)},
\widehat{Y}^{(2)}=y^{(2)}|Z=z)$ and $p(X^{(1)}=x^{(1)},
X^{(2)}=x^{(2)}, Y^{(1)}=y^{(1)},
  y^{(2)}=y^{(2)})$. In appendix \ref{AppendixA}, we show that $\sum_z p(z)d_z\leq \epsilon$.

In appendix \ref{AppendixB}, we show that the triple $\big(R_{12},
R_{21},
R_{12}+R_{21}-I(Y^{(1)};Y^{(2)}|X^{(1)}X^{(2)})+3\kappa(\epsilon)\big)$
is coordinate by coordinate greater than or equal to
\begin{eqnarray*}&\sum_z p(z)\bigg(I(\overrightarrow{F};X^{(1)}|X^{(2)}, Z=z),
I(\overrightarrow{F};X^{(2)}|X^{(1)}, Z=z),\\&
I(\overrightarrow{F};X^{(1)}|X^{(2)}, Z=z)+
I(\overrightarrow{F};X^{(2)}|X^{(1)}, Z=z)\bigg)\end{eqnarray*}
where
$\kappa(\epsilon)=-\epsilon\log(\frac{\epsilon}{|\mathcal{X}^{(1)}||\mathcal{X}^{(2)}||\mathcal{\widehat{Y}}^{(1)}||\mathcal{\widehat{Y}}^{(2)}|})$
and $I(Y^{(1)};Y^{(2)}|X^{(1)}X^{(2)})$ is evaluated assuming that
$\\X^{(1)}, X^{(2)}, Y^{(1)}, Y^{(2)}$ have the joint distribution
of $q(x^{(1)}, x^{(2)}, y^{(1)}, y^{(2)})$.

Therefore the four-tuple $\big(R_{12}, R_{21},
R_{12}+R_{21}-I(Y^{(1)};Y^{(2)}|X^{(1)}X^{(2)})+3\kappa(\epsilon),
\epsilon\big)$ is coordinate by coordinate greater than or equal to
\begin{eqnarray*}&\sum_z p(z)\bigg(I(\overrightarrow{F};X^{(1)}|X^{(2)}, Z=z),
I(\overrightarrow{F};X^{(2)}|X^{(1)}, Z=z),\\&
I(\overrightarrow{F};X^{(1)}|X^{(2)}, Z=z)+
I(\overrightarrow{F};X^{(2)}|X^{(1)}, Z=z),
d_z\bigg).\end{eqnarray*} The Carath\'{e}odory theorem implies that the
four tuple $\big(R_{12}, R_{21},
R_{12}+R_{21}-I(Y^{(1)};Y^{(2)}|X^{(1)}X^{(2)})+3\kappa(\epsilon),
\epsilon\big)$ is coordinate by coordinate greater than or equal to
\begin{eqnarray*}&\sum_{i=1:5} r_i\bigg(I(\overrightarrow{F};X^{(1)}|X^{(2)}, Z=z_i),
I(\overrightarrow{F};X^{(2)}|X^{(1)}, Z=z_i),\\&
I(\overrightarrow{F};X^{(1)}|X^{(2)}, Z=z_i)+
I(\overrightarrow{F};X^{(2)}|X^{(1)}, Z=z_i),
d_{z_i}\bigg).\end{eqnarray*} for some values of $z_1$, $z_2$,...,
$z_5$ and non-negative $r_1$, $r_2$,...,$r_5$ where $\sum_{i=1:5}
r_i=1$. Since for $i=1,...,5$, we have $r_id_{z_i}\leq \epsilon$, if
$d_{z_i}\geq \sqrt\epsilon$, $r_i$ will be less than or equal to
$\sqrt\epsilon$. Let $t=\sum_{i:d_{z_i}\leq \sqrt\epsilon}r_i$. Note
that
$1-5\sqrt\epsilon \leq t\leq 1$. 
We then have:
\begin{eqnarray*}&\sum_{i=1:5} r_i\bigg(I(\overrightarrow{F};X^{(1)}|X^{(2)}, Z=z_i),
I(\overrightarrow{F};X^{(2)}|X^{(1)}, Z=z_i),\\&
I(\overrightarrow{F};X^{(1)}|X^{(2)}, Z=z_i)+
I(\overrightarrow{F};X^{(2)}|X^{(1)}, Z=z_i),
d_{z_i}\bigg).\end{eqnarray*} is coordinate by coordinate greater
than or equal to
\begin{eqnarray*}&(1-5\sqrt\epsilon)\cdot\sum_{i:d_{z_i}\leq \sqrt\epsilon}\ \frac{r_i}{t}\bigg(I(\overrightarrow{F};X^{(1)}|X^{(2)}, Z=z_i),
I(\overrightarrow{F};X^{(2)}|X^{(1)}, Z=z_i),\\&
I(\overrightarrow{F};X^{(1)}|X^{(2)}, Z=z_i)+
I(\overrightarrow{F};X^{(2)}|X^{(1)}, Z=z_i),
d_{z_i}\bigg)\end{eqnarray*} Therefore $\big(R_{12}, R_{21},
R_{12}+R_{21}-I(Y^{(1)};Y^{(2)}|X^{(1)}X^{(2)})+3\kappa(\epsilon)\big)$
is coordinate by coordinate greater than or equal to
\begin{eqnarray*}&(1-5\sqrt\epsilon)\sum_{i:d_{z_i}\leq \sqrt\epsilon}\ \frac{r_i}{t}\bigg(I(\overrightarrow{F};X^{(1)}|X^{(2)}, Z=z_i),
I(\overrightarrow{F};X^{(2)}|X^{(1)}, Z=z_i),\\&
I(\overrightarrow{F};X^{(1)}|X^{(2)}, Z=z_i)+
I(\overrightarrow{F};X^{(2)}|X^{(1)}, Z=z_i)\bigg)\end{eqnarray*}

Therefore any admissible pair $(R_{12}, R_{21})$ must belong to
$\mathcal{R}^{\epsilon}_{2}(r)$ defined as the convex closure of
\begin{eqnarray*}\big\{(R_{12}, R_{21})\in \mathcal{R}^2&:& \exists
p(f_1, f_2, ..., f_r, x^{(1)}, x^{(2)}, \widehat{y}^{(1)},
\widehat{y}^{(2)})\in T^{\epsilon}_2(r),\\&& (X^{(1)}, X^{(2)},
Y^{(1)}, Y^{(2)}) \sim q(x^{(1)}, x^{(2)}, y^{(1)}, y^{(2)}) \mbox{
s.t. }\\R_{12}&\geq& (1-5\sqrt\epsilon)I(X^{(1)};
\overrightarrow{F}|X^{(2)}),\\R_{21}&\geq&(1-5\sqrt\epsilon)I(X^{(2)};
\overrightarrow{F}|X^{(1)})\\
R_{12}+R_{21}&\geq& (1-5\sqrt\epsilon)I(X^{(1)};
\overrightarrow{F}|X^{(2)})+(1-5\sqrt\epsilon)I(X^{(2)};
\overrightarrow{F}|
X^{(1)})+\\&&+I(Y^{(1)};Y^{(2)}|X^{(1)}X^{(2)})-3\kappa(\epsilon)\big\},
\end{eqnarray*}
where
\begin{eqnarray*}
T^{\epsilon}_2(r)\triangleq\big\{p(f_1, f_2, ..., f_r, x^{(1)},
x^{(2)}, \widehat{y}^{(1)}, \widehat{y}^{(2)})&:& \|p(x^{(1)},
x^{(2)}, \widehat{y}^{(1)}, \widehat{y}^{(2)})-q(x^{(1)}, x^{(2)},
\widehat{y}^{(1)}, \widehat{y}^{(2)})\|_1\leq \sqrt\epsilon\\&&
F_i-F_{1:i-1}X^{(1)}-X^{(2)} \mbox{ if } i \mbox{ is odd},\\&&
F_i-F_{1:i-1}X^{(2)}-X^{(1)} \mbox{ if } i \mbox{
is even}\\&& \widehat{Y}^{(1)}-\overrightarrow{F}X^{(1)}-X^{(2)}\widehat{Y}^{(2)}\\
&&\forall\ 1\leq i\leq r:\\&&\ \ \ |\mathcal{F}_i|\leq
|\mathcal{F}_{i-1}||\mathcal{F}_{i-2}|\cdot\cdot\cdot
|\mathcal{F}_{1}||\mathcal{X}^{(1)}||\mathcal{Y}^{(1)}||\mathcal{X}^{(2)}||\mathcal{Y}^{(2)}|+1\big\}.
\end{eqnarray*}
since cardinality bounds on $|\mathcal{F}_i|$ can be imposed using
the generalized Carath\'{e}odory theorem of Fenchel.
This is argued in appendix \ref{AppendixC}. We can then continue as
in the proof of theorem \ref{Thm:1} to show that
$\bigcap_{\epsilon>0}\mathcal{R}^{\epsilon}_{2}(r)=\mathcal{R}_{2}(r)$,
and hence $(R_{12}, R_{21})$ must belong to
$\mathcal{R}^{\epsilon}_{2}(r)$.
\end{proof}

\begin{proof}[Proof of Theorem \ref{Thm:3}]
It suffices to show that $\bigcup_{r\geq
1}\mathcal{R}_{2}(r)\subset \mathcal{R}_{3}$. Take some $r$ and
some point $(R_{12}, R_{21})\in \mathcal{R}_{2}(r)$. Corresponding
to this point is some $p(f_1, f_2, ..., f_r, x^{(1)}, x^{(2)},
y^{(1)}, y^{(2)})\in T_2(r)$. Let $U=F_{1:r}$ be jointly distributed
with $X^{(1)}, X^{(2)}, Y^{(1)}, Y^{(2)}$ according to $p((f_1, f_2,
..., f_r), x^{(1)}, x^{(2)}, y^{(1)}, y^{(2)})$. Note that
$I(X^{(1)}; X^{(2)}|U)\leq I(X^{(1)}; X^{(2)})$ since
\begin{eqnarray*}&I(X^{(1)}; X^{(2)})=I(X^{(1)}F_1; X^{(2)})\geq
I(X^{(1)}; X^{(2)}|F_1)=I(X^{(1)}; X^{(2)}F_2|F_1)\geq\\& I(X^{(1)};
X^{(2)}|F_{1:2})\geq \cdot\cdot\cdot \geq I(X^{(1)};
X^{(2)}|F_{1:r})=I(X^{(1)}; X^{(2)}|U),\end{eqnarray*} where we have
used the fact that for any $p(f_1, f_2, ..., f_r, x^{(1)}, x^{(2)},
y^{(1)}, y^{(2)})\in T_2(r)$, the following Markov chain equations
hold:\begin{eqnarray*}&F_i-F_{1:i-1}X^{(1)}-X^{(2)} \mbox{ if } i
\mbox{ is odd},\\& F_i-F_{1:i-1}X^{(2)}-X^{(1)} \mbox{ if } i \mbox{
is even}\end{eqnarray*}

One can verify that $p(u, x^{(1)}, x^{(2)}, y^{(1)}, y^{(2)})$
satisfies the required properties for being in $T_3$, except the
cardinality bound. Similar to what was done at the end of the proof
of theorem \ref{Thm:2}, one can use the generalized Carath\'{e}odory
theorem of Fenchel to impose the desired cardinality
bound $|\mathcal{U}|$.
\end{proof}

\begin{proof}[Proof of Theorem \ref{Thm:Achievability1}] It suffices to prove
the generalized lemma stated in the discussion following the statement of the theorem.
Here we build upon the ideas Paul Cuff uses to prove the direct part for the problem he considers.

We denote the induced joint distribution on $(x_{1:n}, y^{(1)}_{1:n}, y^{(2)}_{1:n})$ by $\hat{P}(x_{1:n}, y^{(1)}_{1:n}, y^{(2)}_{1:n})$. We use the capital letter $\hat{P}$ to indicate that this probability is a random variable due to the random generation of the codebook. The induced joint distribution on $(x_{1:n}, y^{(1)}_{1:n}, y^{(2)}_{1:n})$ is equal to:
$$\hat{P}(x_{1:n}, y^{(1)}_{1:n}, y^{(2)}_{1:n})=$$$$\sum_{1\leq t_i\leq 2^{nR'_i} \mbox{\small~~for~~}i=1,2,..,r}\bigg(
\prod_{i=1}^nq(x_i)2^{-n\sum_{i=1}^r R'_i}\prod_{i=1}^nq(y^{(1)}_{i}|F_{i}^{(1)}(t_1), F_{i}^{(2)}(t_1,t_2), ..., F_{i}^{(r)}(t_1,t_2,t_3,...,t_r),x_{i})$$$$
\prod_{i=1}^nq(y^{(2)}_{i}|F_{i}^{(1)}(t_1), F_{i}^{(2)}(t_1,t_2), ..., F_{i}^{(r)}(t_1,t_2,t_3,...,t_r),x_{i})\bigg).$$
To prove the theorem, it suffices to show that the expected value of the total variance between the random variable $\hat{P}(x_{1:n}, y^{(1)}_{1:n}, y^{(2)}_{1:n})$ and $\prod_{i=1}^nq(x_i, y^{(1)}_{i}, y^{(2)}_{i})$ converges to zero as $n$ converges infinity, i.e.
$$\lim_{n\rightarrow \infty}\sum_{x_{1:n}, y^{(1)}_{1:n}, y^{(2)}_{1:n}}\textbf{E}\big|\hat{P}(x_{1:n}, y^{(1)}_{1:n}, y^{(2)}_{1:n})-\prod_{i=1}^nq(x_i, y^{(1)}_{i}, y^{(2)}_{i})\big| = 0.$$
Using the linearity of expectation, we show that $\textbf{E}[\hat{P}(x_{1:n}, y^{(1)}_{1:n}, y^{(2)}_{1:n})]=\prod_{i=1}^nq(x_i, y^{(1)}_{i}, y^{(2)}_{i})$ as follows:
$$\textbf{E}[\hat{P}(x_{1:n}, y^{(1)}_{1:n}, y^{(2)}_{1:n})]=$$$$2^{n\sum_{i=1}^rR'_i}\textbf{E}\bigg[\prod_{i=1}^nq(x_i)2^{-n\sum_{i=1}^r R'_i}\prod_{i=1}^nq(y^{(1)}_{i}|F_{i}^{(1)}(1), F_{i}^{(2)}(1,1), ..., F_{i}^{(r)}(1,1,1,...,1),x_{i})=$$$$
\prod_{i=1}^nq(y^{(2)}_{i}|F_{i}^{(1)}(1), F_{i}^{(2)}(1,1), ..., F_{i}^{(r)}(1,1,1,...,1),x_{i})\bigg]$$
$$\prod_{i=1}^nq(x_i)\textbf{E}\bigg[\prod_{i=1}^nq(y^{(1)}_{i}|F_{i}^{(1)}(1), F_{i}^{(2)}(1,1), ..., F_{i}^{(r)}(1,1,1,...,1),x_{i})$$$$
\prod_{i=1}^nq(y^{(2)}_{i}|F_{i}^{(1)}(1), F_{i}^{(2)}(1,1), ..., F_{i}^{(r)}(1,1,1,...,1),x_{i})\bigg]$$
$$=\prod_{i=1}^nq(x_i)\sum_{f_{1:n}^{(1)}(1),f_{1:n}^{(2)}(1,1),..., f_{1:n}^{(r)}(1,1,1,...,1)}\bigg[\prod_{i=1}^nq\big(f_{i}^{(1)}(1),f_{i}^{(2)}(1,1),..., f_{i}^{(r)}(1,1,1,...,1)|x_i\big)\cdot$$$$\prod_{i=1}^nq(y^{(1)}_{i},y^{(2)}_{i}|f_{i}^{(1)}(1), f_{i}^{(2)}(1,1), ..., f_{i}^{(r)}(1,1,1,...,1),x_{i})\bigg]$$
$$=\prod_{i=1}^nq(x_i)\prod_{i=1}^nq(y^{(1)}_{i}, y^{(2)}_{i}|x_i)=\prod_{i=1}^nq(x_i, y^{(1)}_{i}, y^{(2)}_{i}).$$
Take some arbitrary $\epsilon>0$. We first take care of the sum of the total variance over those of non-typical sequences $\mathcal{T}_{\epsilon}^{(n)}$. Using the inequality $\textbf{E}\big[|X-\textbf{E}[X]|\big]\leq 2\textbf{E}[X]$ for non-negative random variables, we can write
$$\sum_{(x_{1:n}, y^{(1)}_{1:n}, y^{(2)}_{1:n})\notin \mathcal{T}_{\epsilon}^{(n)}}\textbf{E}\big|\hat{P}(x_{1:n}, y^{(1)}_{1:n}, y^{(2)}_{1:n})-\prod_{i=1}^nq(x_i, y^{(1)}_{i}, y^{(2)}_{i})\big|\leq$$$$\sum_{(x_{1:n}, y^{(1)}_{1:n}, y^{(2)}_{1:n})\notin \mathcal{T}_{\epsilon}^{(n)}}2\textbf{E}\hat{P}(x_{1:n}, y^{(1)}_{1:n}, y^{(2)}_{1:n})
=$$$$\sum_{(x_{1:n}, y^{(1)}_{1:n}, y^{(2)}_{1:n})\notin \mathcal{T}_{\epsilon}^{(n)}}2\prod_{i=1}^nq(x_i, y^{(1)}_{i}, y^{(2)}_{i})\leq 2\epsilon.$$
It remains to bound $$\sum_{(x_{1:n}, y^{(1)}_{1:n}, y^{(2)}_{1:n})\in \mathcal{T}_{\epsilon}^{(n)}}\textbf{E}\big|\hat{P}(x_{1:n}, y^{(1)}_{1:n}, y^{(2)}_{1:n})-\prod_{i=1}^nq(x_i, y^{(1)}_{i}, y^{(2)}_{i})\big|.$$ Let us define $\hat{P}_1(x_{1:n}, y^{(1)}_{1:n}, y^{(2)}_{1:n})$ and $\hat{P}_2(x_{1:n}, y^{(1)}_{1:n}, y^{(2)}_{1:n})$ as follows:
$$\hat{P}_1(x_{1:n}, y^{(1)}_{1:n}, y^{(2)}_{1:n})=$$$$\sum_{1\leq t_i\leq 2^{nR'_i} \mbox{\small~~for~~}i=1,2,..,r}\bigg(
\prod_{i=1}^nq(x_i)2^{-n\sum_{i=1}^r R'_i}\prod_{i=1}^nq(y^{(1)}_{i},y^{(2)}_{i}|F_{i}^{(1)}(t_1), F_{i}^{(2)}(t_1,t_2), ..., F_{i}^{(r)}(t_1,t_2,t_3,...,t_r),x_{i})$$$$\mathbf{1}\big[(y^{(1)}_{1:n},y^{(2)}_{1:n},F_{1:n}^{(1)}(1), ...,F_{1:n}^{(r)}(1,1,1,...,1),x_{1:n})\in \mathcal{T}_{\epsilon}^{(n)}(q(
y^{(1)}, y^{(2)}, f^{(1)}, f^{(2)}, ..., f^{(r)},x))~\big]\bigg),$$
$$\hat{P}_2(x_{1:n}, y^{(1)}_{1:n}, y^{(2)}_{1:n})=$$$$\sum_{1\leq t_i\leq 2^{nR'_i} \mbox{\small~~for~~}i=1,2,..,r}\bigg(
\prod_{i=1}^nq(x_i)2^{-n\sum_{i=1}^r R'_i}\prod_{i=1}^nq(y^{(1)}_{i},y^{(2)}_{i}|F_{i}^{(1)}(t_1), F_{i}^{(2)}(t_1,t_2), ..., F_{i}^{(r)}(t_1,t_2,t_3,...,t_r),x_{i})$$$$\mathbf{1}\big[(y^{(1)}_{1:n},y^{(2)}_{1:n},F_{1:n}^{(1)}(1), ...,F_{1:n}^{(r)}(1,1,1,...,1),x_{1:n})\notin \mathcal{T}_{\epsilon}^{(n)}(q(
y^{(1)}, y^{(2)}, f^{(1)}, f^{(2)}, ..., f^{(r)},x))~\big]\bigg).$$
Using the triangle inequality and the fact that $\hat{P}(x_{1:n}, y^{(1)}_{1:n}, y^{(2)}_{1:n})=\hat{P}_1(x_{1:n}, y^{(1)}_{1:n}, y^{(2)}_{1:n})+\hat{P}_2(x_{1:n}, y^{(1)}_{1:n}, y^{(2)}_{1:n})$ we can write
$$\textbf{E}\big|\hat{P}(x_{1:n}, y^{(1)}_{1:n}, y^{(2)}_{1:n})-\textbf{E}\hat{P}(x_{1:n}, y^{(1)}_{1:n}, y^{(2)}_{1:n})\big|\leq$$
$$\textbf{E}\big|\hat{P}_1(x_{1:n}, y^{(1)}_{1:n}, y^{(2)}_{1:n})-\textbf{E}\hat{P}_1(x_{1:n}, y^{(1)}_{1:n}, y^{(2)}_{1:n})\big|+$$
$$\textbf{E}\big|\hat{P}_2(x_{1:n}, y^{(1)}_{1:n}, y^{(2)}_{1:n})-\textbf{E}\hat{P}_2(x_{1:n}, y^{(1)}_{1:n}, y^{(2)}_{1:n})\big|.$$
Therefore
$$\sum_{(x_{1:n}, y^{(1)}_{1:n}, y^{(2)}_{1:n})\in \mathcal{T}_{\epsilon}^{(n)}}\textbf{E}\big|\hat{P}(x_{1:n}, y^{(1)}_{1:n}, y^{(2)}_{1:n})-\prod_{i=1}^nq(x_i, y^{(1)}_{i}, y^{(2)}_{i})\big|\leq$$
$$\sum_{(x_{1:n}, y^{(1)}_{1:n}, y^{(2)}_{1:n})\in \mathcal{T}_{\epsilon}^{(n)}}\textbf{E}\big|\hat{P}_1(x_{1:n}, y^{(1)}_{1:n}, y^{(2)}_{1:n})-\textbf{E}\hat{P}_1(x_{1:n}, y^{(1)}_{1:n}, y^{(2)}_{1:n})\big|+$$
$$\sum_{(x_{1:n}, y^{(1)}_{1:n}, y^{(2)}_{1:n})\in \mathcal{T}_{\epsilon}^{(n)}}\textbf{E}\big|\hat{P}_2(x_{1:n}, y^{(1)}_{1:n}, y^{(2)}_{1:n})-\textbf{E}\hat{P}_2(x_{1:n}, y^{(1)}_{1:n}, y^{(2)}_{1:n})\big|.$$
Let us begin with the second term
$$\sum_{(x_{1:n}, y^{(1)}_{1:n}, y^{(2)}_{1:n})\in \mathcal{T}_{\epsilon}^{(n)}}\textbf{E}\big|\hat{P}_2(x_{1:n}, y^{(1)}_{1:n}, y^{(2)}_{1:n})-\textbf{E}\hat{P}_2(x_{1:n}, y^{(1)}_{1:n}, y^{(2)}_{1:n})\big|\leq$$
$$\sum_{(x_{1:n}, y^{(1)}_{1:n}, y^{(2)}_{1:n})\in \mathcal{T}_{\epsilon}^{(n)}}2\textbf{E}\hat{P}_2(x_{1:n}, y^{(1)}_{1:n}, y^{(2)}_{1:n})\leq$$
$$2\sum_{(x_{1:n}, y^{(1)}_{1:n}, y^{(2)}_{1:n})\in \mathcal{T}_{\epsilon}^{(n)}}\prod_{i=1}^nq(x_i)\textbf{E}\bigg(\prod_{i=1}^nq(y^{(1)}_{i},y^{(2)}_{i}|F_{i}^{(1)}(t_1), F_{i}^{(2)}(t_1,t_2), ..., F_{i}^{(r)}(t_1,t_2,t_3,...,t_r),x_{i})$$$$\mathbf{1}\big[(y^{(1)}_{1:n},y^{(2)}_{1:n},F_{1:n}^{(1)}(1), ...,F_{1:n}^{(r)}(1,1,1,...,1),x_{1:n})\notin \mathcal{T}_{\epsilon}^{(n)}(q(
y^{(1)}, y^{(2)}, f^{(1)}, f^{(2)}, ..., f^{(r)},x))~\big]\bigg)\leq$$
$$2\sum_{(y^{(1)}_{1:n},y^{(2)}_{1:n},f_{1:n}^{(1)}(1), ...,f_{1:n}^{(r)}(1,1,1,...,1),x_{1:n})\notin \mathcal{T}_{\epsilon}^{(n)}(q(
y^{(1)}, y^{(2)}, f^{(1)}, f^{(2)}, ..., f^{(r)},x))~\big]}\bigg($$$$\prod_{i=1}^nq(x_i)\prod_{i=1}^nq(f_{i}^{(1)}(t_1), f_{i}^{(2)}(t_1,t_2), ..., f_{i}^{(r)}(t_1,t_2,t_3,...,t_r)|x_{i})$$$$\prod_{i=1}^nq(y^{(1)}_{i},y^{(2)}_{i}|f_{i}^{(1)}(t_1), f_{i}^{(2)}(t_1,t_2), ..., f_{i}^{(r)}(t_1,t_2,t_3,...,t_r),x_{i})\bigg)\leq 2\epsilon.$$
We will work out the first term now
$$\sum_{(x_{1:n}, y^{(1)}_{1:n}, y^{(2)}_{1:n})\in \mathcal{T}_{\epsilon}^{(n)}}\textbf{E}\big|\hat{P}_1(x_{1:n}, y^{(1)}_{1:n}, y^{(2)}_{1:n})-\textbf{E}\hat{P}_1(x_{1:n}, y^{(1)}_{1:n}, y^{(2)}_{1:n})\big|\leq$$
$$\sum_{(x_{1:n}, y^{(1)}_{1:n}, y^{(2)}_{1:n})\in \mathcal{T}_{\epsilon}^{(n)}}\sqrt{\textbf{E}\big(\hat{P}_1(x_{1:n}, y^{(1)}_{1:n}, y^{(2)}_{1:n})-\textbf{E}\hat{P}_1(x_{1:n}, y^{(1)}_{1:n}, y^{(2)}_{1:n})\big)^2}=$$
$$\sum_{(x_{1:n}, y^{(1)}_{1:n}, y^{(2)}_{1:n})\in \mathcal{T}_{\epsilon}^{(n)}}\sqrt{\textbf{Var}\big(\hat{P}_1(x_{1:n}, y^{(1)}_{1:n}, y^{(2)}_{1:n})\big)}.$$

We now compute the variance of $\hat{P}_1(x_{1:n}, y^{(1)}_{1:n}, y^{(2)}_{1:n})$ using the formula $Var(K_1+K_2+...+K_m)=\sum_{1\leq i,j\leq m}Cov(K_i, K_j)$:
$$\textbf{Var}\big(\hat{P}_1(x_{1:n}, y^{(1)}_{1:n}, y^{(2)}_{1:n})\big)=$$$$
\sum_{1\leq t_i\leq 2^{nR'_i},1\leq \tilde{t}_i\leq 2^{nR'_i} \mbox{\small~~for~~}i=1,2,..,r}
Cov\bigg($$$$
\prod_{i=1}^nq(x_i)2^{-n\sum_{i=1}^r R'_i}\prod_{i=1}^nq(y^{(1)}_{i},y^{(2)}_{i}|F_{i}^{(1)}(t_1), F_{i}^{(2)}(t_1,t_2), ..., F_{i}^{(r)}(t_1,t_2,t_3,...,t_r),x_{i})$$$$\mathbf{1}\big[(y^{(1)}_{1:n},y^{(2)}_{1:n},F_{1:n}^{(1)}(t_1), ...,F_{1:n}^{(r)}(t_1,t_2,t_3,...,t_r),x_{1:n})\in \mathcal{T}_{\epsilon}^{(n)}(q(
y^{(1)}, y^{(2)}, f^{(1)}, f^{(2)}, ..., f^{(r)},x))~\big]$$$$~~~\textbf{,}\mbox{ and }~~~$$$$
\prod_{i=1}^nq(x_i)2^{-n\sum_{i=1}^r R'_i}\prod_{i=1}^nq(y^{(1)}_{i},y^{(2)}_{i}|F_{i}^{(1)}(\tilde{t}_1), F_{i}^{(2)}(\tilde{t}_1,\tilde{t}_2), ..., F_{i}^{(r)}(\tilde{t}_1,\tilde{t}_2,\tilde{t}_3,...,\tilde{t}_r),x_{i})$$$$\mathbf{1}\big[(y^{(1)}_{1:n},y^{(2)}_{1:n},F_{1:n}^{(1)}(\tilde{t}_1), ...,F_{1:n}^{(r)}(\tilde{t}_1,\tilde{t}_2,\tilde{t}_3,...,\tilde{t}_r),x_{1:n})\in \mathcal{T}_{\epsilon}^{(n)}(q(
y^{(1)}, y^{(2)}, f^{(1)}, f^{(2)}, ..., f^{(r)},x))~\big]\bigg).$$
The number of sequences $(t_1,t_2,...,t_r)$ and $(\tilde{t}_1,\tilde{t}_2,\tilde{t}_3,...,\tilde{t}_r)$ that match in the first $s$ coordinates but differ in the $(s+1)$-th coordinate, i.e. $t_1=\tilde{t}_1, t_2=\tilde{t}_2, ..., t_s=\tilde{t}_s, t_{s+1}\neq \tilde{t}_{s+1}$ is less than or equal to $2^{n\sum_{i=1}^s R'_i}2^{2n\sum_{i=s+1}^r R'_i}$. Given a fixed $s$, the covariance of any two terms corresponding to $(t_1,t_2,...,t_r)$ and $(\tilde{t}_1,\tilde{t}_2,\tilde{t}_3,...,\tilde{t}_r)$ are the same since from the $(s+1)$-th stage onwards the $F$ sequences will fall into different cloud centers and the actual value of the indices $t_{s+2},t_{s+3},...,t_r, \tilde{t}_{s+2},\tilde{t}_{s+3},...,\tilde{t}_r$ have no effect on the joint distribution of the two terms whose covariance is being computed. Therefore the variance of $\hat{P}_1(x_{1:n}, y^{(1)}_{1:n}, y^{(2)}_{1:n})$ can be bounded from above by the summation of $s$ from $0$ and $r$ of
$$2^{n\sum_{i=1}^s R'_i}2^{2n\sum_{i=s+1}^r R'_i}
Cov\bigg($$$$
\prod_{i=1}^nq(x_i)2^{-n\sum_{i=1}^r R'_i}\prod_{i=1}^nq(y^{(1)}_{i},y^{(2)}_{i}|F_{i}^{(1)}(1), F_{i}^{(2)}(1,1), ..., F_{i}^{(r)}(1,1,1,...,1),x_{i})$$$$\mathbf{1}\big[(y^{(1)}_{1:n},y^{(2)}_{1:n},F_{1:n}^{(1)}(1), ...,F_{1:n}^{(r)}(1,1,1,...,1),x_{1:n})\in \mathcal{T}_{\epsilon}^{(n)}(q(
y^{(1)}, y^{(2)}, f^{(1)}, f^{(2)}, ..., f^{(r)},x))~\big]$$$$~~~\textbf{,}\mbox{ and }~~~$$$$
\prod_{i=1}^nq(x_i)2^{-n\sum_{i=1}^r R'_i}\prod_{i=1}^nq(y^{(1)}_{i},y^{(2)}_{i}|F_{i}^{(1)}(1), ...,F_{i}^{(s)}(1,1,1,...,1),F_{i}^{(s+1)}(\underbrace{1,1,1,...1}_{s~\footnotesize\mbox{~terms}},2), F_{i}^{(r)}(\underbrace{1,1,1,...1}_{s~\footnotesize\mbox{~terms}},2,...,2),x_{i})$$
$$\mathbf{1}\big[(y^{(1)}_{1:n},y^{(2)}_{1:n},F_{1:n}^{(1)}(1), ...,F_{i}^{(s)}(1,1,1,...,1),F_{i}^{(s+1)}(\underbrace{1,1,1,...1}_{s~\footnotesize\mbox{~terms}},2), ...,F_{1:n}^{(r)}(\underbrace{1,1,1,...1}_{s~\footnotesize\mbox{~terms}},2,...,2),x_{1:n})\in \mathcal{T}_{\epsilon}^{(n)}~\big]\bigg).$$
The above term is equal to the summation of $s$ from $0$ and $r$ of
$$\prod_{i=1}^nq(x_i)^22^{-n\sum_{i=1}^s R'_i}Cov\bigg($$$$
\prod_{i=1}^nq(y^{(1)}_{i},y^{(2)}_{i}|F_{i}^{(1)}(1), ...,F_{i}^{(s)}(1,1,1,...,1),F_{i}^{(s+1)}(1,1,1,...,1), F_{i}^{(r)}(1,1,1,...,1),x_{i})$$$$\mathbf{1}\big[(y^{(1)}_{1:n},y^{(2)}_{1:n},F_{1:n}^{(1)}(1), ...,F_{i}^{(s)}(1,1,1,...,1),F_{i}^{(s+1)}(1,1,1,...,1), ...,F_{1:n}^{(r)}(1,1,1,...,1),x_{1:n})\in \mathcal{T}_{\epsilon}^{(n)}~\big]$$$$~~~\textbf{,}\mbox{ and }~~~$$$$
\prod_{i=1}^nq(y^{(1)}_{i},y^{(2)}_{i}|F_{i}^{(1)}(1), ...,F_{i}^{(s)}(1,1,1,...,1),F_{i}^{(s+1)}(\underbrace{1,1,1,...1}_{s~\footnotesize\mbox{~terms}},2), F_{i}^{(r)}(\underbrace{1,1,1,...1}_{s~\footnotesize\mbox{~terms}},2,...,2),x_{i})$$
$$\mathbf{1}\big[(y^{(1)}_{1:n},y^{(2)}_{1:n},F_{1:n}^{(1)}(1), ...,F_{i}^{(s)}(1,1,1,...,1),F_{i}^{(s+1)}(\underbrace{1,1,1,...1}_{s~\footnotesize\mbox{~terms}},2), ...,F_{1:n}^{(r)}(\underbrace{1,1,1,...1}_{s~\footnotesize\mbox{~terms}},2,...,2),x_{1:n})\in \mathcal{T}_{\epsilon}^{(n)}~\big]\bigg).$$
When $s=0$, the covariance is exactly zero. For $s>0$, we can use the inequality $Cov(X,Y)\leq \mathbf{E}(XY)$ for non-negative random variables to bound the covariance term
$$Cov\bigg($$$$
\prod_{i=1}^nq(y^{(1)}_{i},y^{(2)}_{i}|F_{i}^{(1)}(1), ...,F_{i}^{(s)}(1,1,1,...,1),F_{i}^{(s+1)}(1,1,1,...,1), F_{i}^{(r)}(1,1,1,...,1),x_{i})$$$$\mathbf{1}\big[(y^{(1)}_{1:n},y^{(2)}_{1:n},F_{1:n}^{(1)}(1), ...,F_{i}^{(s)}(1,1,1,...,1),F_{i}^{(s+1)}(1,1,1,...,1), ...,F_{1:n}^{(r)}(1,1,1,...,1),x_{1:n})\in \mathcal{T}_{\epsilon}^{(n)}~\big]$$$$~~~\textbf{,}\mbox{ and }~~~$$$$
\prod_{i=1}^nq(y^{(1)}_{i},y^{(2)}_{i}|F_{i}^{(1)}(1), ...,F_{i}^{(s)}(1,1,1,...,1),F_{i}^{(s+1)}(\underbrace{1,1,1,...1}_{s~\footnotesize\mbox{~terms}},2), F_{i}^{(r)}(\underbrace{1,1,1,...1}_{s~\footnotesize\mbox{~terms}},2,...,2),x_{i})$$
$$\mathbf{1}\big[(y^{(1)}_{1:n},y^{(2)}_{1:n},F_{1:n}^{(1)}(1), ...,F_{i}^{(s)}(1,1,1,...,1),F_{i}^{(s+1)}(\underbrace{1,1,1,...1}_{s~\footnotesize\mbox{~terms}},2), ...,F_{1:n}^{(r)}(\underbrace{1,1,1,...1}_{s~\footnotesize\mbox{~terms}},2,...,2),x_{1:n})\in \mathcal{T}_{\epsilon}^{(n)}~\big]\bigg)$$
from above by
$$\mathbf{E}\bigg(
\prod_{i=1}^nq(y^{(1)}_{i},y^{(2)}_{i}|F_{i}^{(1)}(1), ...,F_{i}^{(s)}(1,1,1,...,1),F_{i}^{(s+1)}(1,1,1,...,1), F_{i}^{(r)}(1,1,1,...,1),x_{i})\cdot$$$$\mathbf{1}\big[(y^{(1)}_{1:n},y^{(2)}_{1:n},F_{1:n}^{(1)}(1), ...,F_{i}^{(s)}(1,1,1,...,1),F_{i}^{(s+1)}(1,1,1,...,1), ...,F_{1:n}^{(r)}(1,1,1,...,1),x_{1:n})\in \mathcal{T}_{\epsilon}^{(n)}~\big]\cdot$$$$
\prod_{i=1}^nq(y^{(1)}_{i},y^{(2)}_{i}|F_{i}^{(1)}(1), ...,F_{i}^{(s)}(1,1,1,...,1),F_{i}^{(s+1)}(\underbrace{1,1,1,...1}_{s~\footnotesize\mbox{~terms}},2), F_{i}^{(r)}(\underbrace{1,1,1,...1}_{s~\footnotesize\mbox{~terms}},2,...,2),x_{i})\cdot
$$$$\mathbf{1}\big[(y^{(1)}_{1:n},y^{(2)}_{1:n},F_{1:n}^{(1)}(1), ...,F_{i}^{(s)}(1,1,1,...,1),F_{i}^{(s+1)}(\underbrace{1,1,1,...1}_{s~\footnotesize\mbox{~terms}},2), ...,F_{1:n}^{(r)}(\underbrace{1,1,1,...1}_{s~\footnotesize\mbox{~terms}},2,...,2),x_{1:n})\in \mathcal{T}_{\epsilon}^{(n)}~\big]\bigg).$$
We can bound this further from above by
$$\mathbf{E}\bigg(
\prod_{i=1}^nq(y^{(1)}_{i},y^{(2)}_{i}|F_{i}^{(1)}(1), ...,F_{i}^{(s)}(1,1,1,...,1),F_{i}^{(s+1)}(1,1,1,...,1), F_{i}^{(r)}(1,1,1,...,1),x_{i})\cdot$$$$
\prod_{i=1}^nq(y^{(1)}_{i},y^{(2)}_{i}|F_{i}^{(1)}(1), ...,F_{i}^{(s)}(1,1,1,...,1),F_{i}^{(s+1)}(\underbrace{1,1,1,...1}_{s~\footnotesize\mbox{~terms}},2), F_{i}^{(r)}(\underbrace{1,1,1,...1}_{s~\footnotesize\mbox{~terms}},2,...,2),x_{i})\cdot
$$$$\mathbf{1}\big[(y^{(1)}_{1:n},y^{(2)}_{1:n},F_{1:n}^{(1)}(1), ...,F_{i}^{(s)}(1,1,1,...,1),x_{1:n})\in \mathcal{T}_{\epsilon}^{(n)}~\big]\bigg).$$
Let us compute this expectation by first conditioning it on $F_{i}^{(1)}(1), ...,F_{i}^{(s)}(1,1,1,...,1)$:
$$\mathbf{E}\bigg(
\prod_{i=1}^nq(y^{(1)}_{i},y^{(2)}_{i}|F_{i}^{(1)}(1), ...,F_{i}^{(s)}(1,1,1,...,1),F_{i}^{(s+1)}(1,1,1,...,1), F_{i}^{(r)}(1,1,1,...,1),x_{i})\cdot$$$$
\prod_{i=1}^nq(y^{(1)}_{i},y^{(2)}_{i}|F_{i}^{(1)}(1), ...,F_{i}^{(s)}(1,1,1,...,1),F_{i}^{(s+1)}(\underbrace{1,1,1,...1}_{s~\footnotesize\mbox{~terms}},2), F_{i}^{(r)}(\underbrace{1,1,1,...1}_{s~\footnotesize\mbox{~terms}},2,...,2),x_{i})\cdot$$$$\mathbf{1}\big[(y^{(1)}_{1:n},y^{(2)}_{1:n},F_{1:n}^{(1)}(1), ...,F_{i}^{(s)}(1,1,1,...,1),x_{1:n})\in \mathcal{T}_{\epsilon}^{(n)}~\big]\bigg)=$$
$$\mathbf{E}\bigg(\mathbf{E}\bigg(
\prod_{i=1}^nq(y^{(1)}_{i},y^{(2)}_{i}|F_{i}^{(1)}(1), ...,F_{i}^{(s)}(1,1,1,...,1),F_{i}^{(s+1)}(1,1,1,...,1), F_{i}^{(r)}(1,1,1,...,1),x_{i})\cdot$$$$
\prod_{i=1}^nq(y^{(1)}_{i},y^{(2)}_{i}|F_{i}^{(1)}(1), ...,F_{i}^{(s)}(1,1,1,...,1),F_{i}^{(s+1)}(\underbrace{1,1,1,...1}_{s~\footnotesize\mbox{~terms}},2), F_{i}^{(r)}(\underbrace{1,1,1,...1}_{s~\footnotesize\mbox{~terms}},2,...,2),x_{i})\cdot$$$$\mathbf{1}\big[(y^{(1)}_{1:n},y^{(2)}_{1:n},F_{1:n}^{(1)}(1), ...,F_{i}^{(s)}(1,1,1,...,1),x_{1:n})\in \mathcal{T}_{\epsilon}^{(n)}~\big]\bigg|F_{i}^{(1)}(1), ...,F_{i}^{(s)}(1,1,1,...,1)\bigg)\bigg)=$$
$$\mathbf{E}\bigg(\prod_{i=1}^nq(y^{(1)}_{i},y^{(2)}_{i}|F_{i}^{(1)}(1), ...,F_{i}^{(s)}(1,1,1,...,1),x_{i})^2$$$$\mathbf{1}\big[(y^{(1)}_{1:n},y^{(2)}_{1:n},F_{1:n}^{(1)}(1), ...,F_{i}^{(s)}(1,1,1,...,1),x_{1:n})\in \mathcal{T}_{\epsilon}^{(n)}~\big]\bigg).$$
Using the fact that when $$(y^{(1)}_{1:n},y^{(2)}_{1:n},F_{1:n}^{(1)}(1), ...,F_{1:n}^{(s)}(1,1,1,...,1),x_{1:n})$$ are jointly typical, the term $$\prod_{i=1}^nq(y^{(1)}_{i},y^{(2)}_{i}|F_{i}^{(1)}(1), ...,F_{i}^{(s)}(1,1,1,...,1),x_{i})$$ is less than or equal to
$2^{-n(H(Y^{(1)}_{i},Y^{(2)}_{i}|F^{(1)}, ..., F^{(s)},X)-\epsilon)}$, we can further bound the covariance term from above by
$$2^{-n(H(Y^{(1)}_{i},Y^{(2)}_{i}|F^{(1)}, ..., F^{(s)},X)-\epsilon)}\mathbf{E}\bigg(\prod_{i=1}^nq(y^{(1)}_{i},y^{(2)}_{i}|F_{i}^{(1)}(1), ...,F_{i}^{(s)}(1,1,1,...,1),x_{i})$$$$\mathbf{1}\big[(y^{(1)}_{1:n},y^{(2)}_{1:n},F_{1:n}^{(1)}(1), ...,F_{i}^{(s)}(1,1,1,...,1),x_{1:n})\in \mathcal{T}_{\epsilon}^{(n)}~\big]\bigg)\leq$$
$$2^{-n(H(Y^{(1)}_{i},Y^{(2)}_{i}|F^{(1)}, ..., F^{(s)},X)-\epsilon)}\mathbf{E}\bigg(\prod_{i=1}^nq(y^{(1)}_{i},y^{(2)}_{i}|F_{i}^{(1)}(1), ...,F_{i}^{(s)}(1,1,1,...,1),x_{i})\bigg)=$$
$$2^{-n(H(Y^{(1)}_{i},Y^{(2)}_{i}|F^{(1)}, ..., F^{(s)},X)-\epsilon)}\prod_{i=1}^nq(y^{(1)}_{i},y^{(2)}_{i}|x_i).$$
Since $x_{1:n},y^{(1)}_{1:n}, y^{(2)}_{1:n}$ is jointly typical, $\prod_{i=1}^nq(y^{(1)}_{i},y^{(2)}_{i}|x_i)$ is less than or equal to $2^{-n(H(Y^{(1)}, Y^{(2)}|X)-\epsilon)}$. Therefore the covariance term is less than or equal to $2^{-n(H(Y^{(1)}_{i},Y^{(2)}_{i}|F^{(1)}, ..., F^{(s)},X)-\epsilon)-n(H(Y^{(1)}, Y^{(2)}|X)-\epsilon)}$.

Thus the variance of $\hat{P}_1(x_{1:n}, y^{(1)}_{1:n}, y^{(2)}_{1:n})$ is less than or equal to the summation of $s$ from $0$ and $r$ of
$$\prod_{i=1}^nq(x_i)^22^{-n\sum_{i=1}^s R'_i}2^{-n(H(Y^{(1)}_{i},Y^{(2)}_{i}|F^{(1)}, ..., F^{(s)},X)-\epsilon)-n(H(Y^{(1)}, Y^{(2)}|X)-\epsilon)}.$$
Since $x_{1:n}$ is typical, $\prod_{i=1}^nq(x_i)$ is less than or equal to $2^{-2n(H(X)-\epsilon)}$. Thus,
the variance of $\hat{P}_1(x_{1:n}, y^{(1)}_{1:n}, y^{(2)}_{1:n})$ is less than or equal to the summation of $s$ from $0$ and $r$ of
$$2^{-2n(H(X)-\epsilon)-n\sum_{i=1}^s R'_i-n(H(Y^{(1)}_{i},Y^{(2)}_{i}|F^{(1)}, ..., F^{(s)},X)-\epsilon)-n(H(Y^{(1)}, Y^{(2)}|X)-\epsilon)}.$$
Remember that the total variance was shown to be bounded from above by $$\sum_{(x_{1:n}, y^{(1)}_{1:n}, y^{(2)}_{1:n})\in \mathcal{T}_{\epsilon}^{(n)}}\sqrt{\textbf{Var}\big(\hat{P}(x_{1:n}, y^{(1)}_{1:n}, y^{(2)}_{1:n})\big)}.$$
Our upper bound on the variance of $\hat{P}_1(x_{1:n}, y^{(1)}_{1:n}, y^{(2)}_{1:n})$ does not depend on the particular realization of $x_{1:n}, y^{(1)}_{1:n}, y^{(2)}_{1:n}$. Since the number of jointly typical sequences of $(x_{1:n}, y^{(1)}_{1:n}, y^{(2)}_{1:n})$ is bounded from above by $2^{n(H(X,Y^{(1)}, Y^{(2)})+\epsilon)}$, the total variance will be less than or equal to square root of the summation of $s$ from $0$ and $r$ of
$$2^{2n(H(X,Y^{(1)}, Y^{(2)})+\epsilon)-2n(H(X)-\epsilon)-n\sum_{i=1}^s R'_i-n(H(Y^{(1)}_{i},Y^{(2)}_{i}|F^{(1)}, ..., F^{(s)},X)-\epsilon)-n(H(Y^{(1)}, Y^{(2)}|X)-\epsilon)}=$$
$$2^{-n(\sum_{i=1}^s R'_i-I(Y^{(1)}, Y^{(2)};F^{(1)}, ..., F^{(s)}|X)-7\epsilon)}.$$
This term converges to zero for small values of $\epsilon$ since $\sum_{i=1}^s R'_i>I(Y^{(1)}, Y^{(2)};F^{(1)}, ..., F^{(s)}|X)$.
\end{proof}
\begin{proof}[Proof of Theorem \ref{Thm:Achievability2}] Take some $\epsilon>0$. One can then find a natural number $n_{0}$ such that for any $n>n_0$, there are $(n,\epsilon)$ codes for both the problems of generating $q(z^{(1)},z^{(2)}|x^{(1)},x^{(2)})$ and $q(y^{(1)},y^{(2)}|x^{(1)}z^{(1)},x^{(2)}z^{(2)})$ with the given rounds of communication and rate pair constraints achieving total variation distances less than or equal to $\epsilon$. We claim that concatenating these codes would give us a $(n,2\epsilon)$ code for the problem of generating $q(y^{(1)},y^{(2)}|x^{(1)},x^{(2)})$ with $r_1+r_2$ rounds of communication and the rate pair $(R_{12}+R'_{12}, R_{21}+R'_{21})$. First of all, one can run the concatenated code with  $r_1+r_2$ rounds of communication. If $r_1$ is even, then it will be the first party's turn to start the simulation of the second code. If  $r_1$ is odd, one can interpret the last communication of the first party in the first code and the first communication of the first party in the second code as a single round of communication. This adds up to $r_1+r_2-1$ rounds of communication for the concatenated code which is even better. The rate of the concatenated code is clearly $(R_{12}+R'_{12}, R_{21}+R'_{21})$. So, it remains to check the total variation distance in the concatenated code. We know that
\begin{align*}\sum_{x^{(1)}_{1:n}, x^{(2)}_{1:n}, z^{(1)}_{1:n}, z^{(2)}_{1:n}}\big|&p(X^{(1)}_{1:n}=x^{(1)}_{1:n}, X^{(2)}_{1:n}=x^{(2)}_{1:n}, \widehat{Z}^{(1)}_{1:n}=z^{(1)}_{1:n},
\widehat{Z}^{(2)}_{1:n}=z^{(2)}_{1:n})-\\&p(X^{(1)}_{1:n}=x^{(1)}_{1:n}, X^{(2)}_{1:n}=x^{(2)}_{1:n}, Z^{(1)}_{1:n}=z^{(1)}_{1:n},
Z^{(2)}_{1:n}=z^{(2)}_{1:n})\big|\leq \epsilon.
\end{align*}
\begin{align*}\sum_{x^{(1)}_{1:n}, x^{(2)}_{1:n}, z^{(1)}_{1:n}, z^{(2)}_{1:n}, y^{(1)}_{1:n}, y^{(2)}_{1:n}}\bigg|&\bigg(p(\widehat{Y}^{(1)}_{1:n}=y^{(1)}_{1:n},
\widehat{Y}^{(2)}_{1:n}=y^{(2)}_{1:n}|X^{(1)}_{1:n}=x^{(1)}_{1:n}, X^{(2)}_{1:n}=x^{(2)}_{1:n}, \widehat{Z}^{(1)}_{1:n}=z^{(1)}_{1:n},
\widehat{Z}^{(2)}_{1:n}=z^{(2)}_{1:n})-\\&p(Y^{(1)}_{1:n}=y^{(1)}_{1:n},
Y^{(2)}_{1:n}=y^{(2)}_{1:n}|X^{(1)}_{1:n}=x^{(1)}_{1:n}, X^{(2)}_{1:n}=x^{(2)}_{1:n}, Z^{(1)}_{1:n}=z^{(1)}_{1:n},
Z^{(2)}_{1:n}=z^{(2)}_{1:n})\bigg)\times\\&p(X^{(1)}_{1:n}=x^{(1)}_{1:n}, X^{(2)}_{1:n}=x^{(2)}_{1:n}, Z^{(1)}_{1:n}=z^{(1)}_{1:n},
Z^{(2)}_{1:n}=z^{(2)}_{1:n})\bigg|\leq \epsilon.
\end{align*}
Multiply the first equation by $$p(\widehat{Y}^{(1)}_{1:n}=y^{(1)}_{1:n},
\widehat{Y}^{(2)}_{1:n}=y^{(2)}_{1:n}|X^{(1)}_{1:n}=x^{(1)}_{1:n}, X^{(2)}_{1:n}=x^{(2)}_{1:n}, \widehat{Z}^{(1)}_{1:n}=z^{(1)}_{1:n},
\widehat{Z}^{(2)}_{1:n}=z^{(2)}_{1:n})$$and sum it up over all $y^{(1)}_{1:n}$, and $y^{(2)}_{1:n}$; then add this with the second equation above and use the triangle inequality to conclude that
\begin{align*}\sum_{x^{(1)}_{1:n}, x^{(2)}_{1:n}, z^{(1)}_{1:n}, z^{(2)}_{1:n}, y^{(1)}_{1:n}, y^{(2)}_{1:n}}\big|&p(\widehat{Y}^{(1)}_{1:n}=y^{(1)}_{1:n},
\widehat{Y}^{(2)}_{1:n}=y^{(2)}_{1:n},X^{(1)}_{1:n}=x^{(1)}_{1:n}, X^{(2)}_{1:n}=x^{(2)}_{1:n}, \widehat{Z}^{(1)}_{1:n}=z^{(1)}_{1:n},
\widehat{Z}^{(2)}_{1:n}=z^{(2)}_{1:n})-\\&p(Y^{(1)}_{1:n}=y^{(1)}_{1:n},
Y^{(2)}_{1:n}=y^{(2)}_{1:n},X^{(1)}_{1:n}=x^{(1)}_{1:n}, X^{(2)}_{1:n}=x^{(2)}_{1:n}, Z^{(1)}_{1:n}=z^{(1)}_{1:n},
Z^{(2)}_{1:n}=z^{(2)}_{1:n})\big|\leq \epsilon.
\end{align*}
\end{proof}

%
\newpage
\appendices
\section{}\label{AppendixA}
In this appendix, we show that $\sum_z p(z)d_z\leq \epsilon$. Take
an arbitrary $1\leq j\leq n$,
\begin{eqnarray*}&\epsilon\geq \sum_{x^{(1)}_{1:n}, x^{(2)}_{1:n}, y^{(1)}_{1:n}, x^{(2)}_{1:n}}\bigg|
p(X^{(1)}_{1:n}=x^{(1)}_{1:n}, X^{(2)}_{1:n}=x^{(2)}_{1:n},
\widehat{Y}^{(1)}_{1:n}=y^{(1)}_{1:n},
\widehat{Y}^{(2)}_{1:n}=y^{(2)}_{1:n})-\\&-p(X^{(1)}_{1:n}=x^{(1)}_{1:n},
X^{(2)}_{1:n}=x^{(2)}_{1:n}, Y^{(1)}_{1:n}=y^{(1)}_{1:n},
  Y^{(2)}_{1:n}=y^{(2)}_{1:n})\bigg|=\\&
=\sum_{x^{(1)}_{1:n}, x^{(2)}_{1:n}, y^{(1)}_{1:n},
x^{(2)}_{1:n}}p(X^{(1)}_{1:j-1}=x^{(1)}_{1:j-1},
X^{(2)}_{j+1:n}=x^{(2)}_{j+1:n})\times \\&\times\bigg|
p(X^{(1)}_{j:n}=x^{(1)}_{j:n}, X^{(2)}_{1:j}=x^{(2)}_{1:j},
\widehat{Y}^{(1)}_{1:n}=y^{(1)}_{1:n},
\widehat{Y}^{(2)}_{1:n}=y^{(2)}_{1:n}|X^{(1)}_{1:j-1}=x^{(1)}_{1:j-1},
X^{(2)}_{j+1:n}=x^{(2)}_{j+1:n})-\\&-p(X^{(1)}_{j:n}=x^{(1)}_{j:n},
X^{(2)}_{1:j}=x^{(2)}_{1:j}, Y^{(1)}_{1:n}=y^{(1)}_{1:n},
  Y^{(2)}_{1:n}=y^{(2)}_{1:n}|X^{(1)}_{1:j-1}=x^{(1)}_{1:j-1},
X^{(2)}_{j+1:n}=x^{(2)}_{j+1:n})\bigg|\geq
\\&
\sum_{x^{(1)}_{1:j-1},
x^{(2)}_{j+1:n}}p(X^{(1)}_{1:j-1}=x^{(1)}_{1:j-1},
X^{(2)}_{j+1:n}=x^{(2)}_{j+1:n})\times \sum_{x^{(1)}_{j},
x^{(2)}_{j}, y^{(1)}_{j}, y^{(2)}_{j}}
 \\&\bigg|\sum_{x^{(1)}_{j+1:n},
x^{(2)}_{1:j-1}, y^{(1)}_{1:j-1, j+1:n}, y^{(2)}_{1:j-1,
j+1:n}}\bigg(\\&p(X^{(1)}_{j:n}=x^{(1)}_{j:n},
X^{(2)}_{1:j}=x^{(2)}_{1:j}, \widehat{Y}^{(1)}_{1:n}=y^{(1)}_{1:n},
\widehat{Y}^{(2)}_{1:n}=y^{(2)}_{1:n}|X^{(1)}_{1:j-1}=x^{(1)}_{1:j-1},
X^{(2)}_{j+1:n}=x^{(2)}_{j+1:n})-\\&-p(X^{(1)}_{j:n}=x^{(1)}_{j:n},
X^{(2)}_{1:j}=x^{(2)}_{1:j}, Y^{(1)}_{1:n}=y^{(1)}_{1:n},
  Y^{(2)}_{1:n}=y^{(2)}_{1:n}|X^{(1)}_{1:j-1}=x^{(1)}_{1:j-1},
X^{(2)}_{j+1:n}=x^{(2)}_{j+1:n})\bigg)\bigg|=\\&=
\sum_{x^{(1)}_{1:j-1},
x^{(2)}_{j+1:n}}p(X^{(1)}_{1:j-1}=x^{(1)}_{1:j-1},
X^{(2)}_{j+1:n}=x^{(2)}_{j+1:n})\times \sum_{x^{(1)}_{j},
x^{(2)}_{j}, y^{(1)}_{j}, y^{(2)}_{j}}
 \\&\bigg|p(X^{(1)}_{j}=x^{(1)}_{j},
X^{(2)}_{j}=x^{(2)}_{j}, \widehat{Y}^{(1)}_{j}=y^{(1)}_{j},
\widehat{Y}^{(2)}_{j}=y^{(2)}_{j}|X^{(1)}_{1:j-1}=x^{(1)}_{1:j-1},
X^{(2)}_{j+1:n}=x^{(2)}_{j+1:n})-\\&-p(X^{(1)}_{j}=x^{(1)}_{j},
X^{(2)}_{j}=x^{(2)}_{j}, Y^{(1)}_{j}=y^{(1)}_{j},
  Y^{(2)}_{j}=y^{(2)}_{j}|X^{(1)}_{1:j-1}=x^{(1)}_{1:j-1},
X^{(2)}_{j+1:n}=x^{(2)}_{j+1:n})\bigg)\bigg|
\end{eqnarray*}
Since the above equation holds for any $1\leq j\leq n$, we get
$\sum_z p(z)d_z\leq \epsilon$.

\section{}\label{AppendixB}
In this appendix, we show that the triple $\big(R_{12}, R_{21},
R_{12}+R_{21}-I(Y^{(1)};Y^{(2)}|X^{(1)}X^{(2)})+3\kappa(\epsilon)\big)$
is coordinate by coordinate greater than or equal to
\begin{eqnarray*}&\sum_z p(z)\bigg(I(\overrightarrow{F};X^{(1)}|X^{(2)}, Z=z),
I(\overrightarrow{F};X^{(2)}|X^{(1)}, Z=z),\\&
I(\overrightarrow{F};X^{(1)}|X^{(2)}, Z=z)+
I(\overrightarrow{F};X^{(2)}|X^{(1)}, Z=z)\bigg)\end{eqnarray*}
where
$\kappa(\epsilon)=-\epsilon\log(\frac{\epsilon}{|\mathcal{X}^{(1)}||\mathcal{X}^{(2)}||\mathcal{\widehat{Y}}^{(1)}||\mathcal{\widehat{Y}}^{(2)}|})$
and $I(Y^{(1)};Y^{(2)}|X^{(1)}X^{(2)})$ is evaluated assuming that
$\\X^{(1)}, X^{(2)}, Y^{(1)}, Y^{(2)}$ have the joint distribution
of $q(x^{(1)}, x^{(2)}, y^{(1)}, y^{(2)})$.

\begin{itemize}
  \item \begin{eqnarray*}&R_{12} \geq \frac{1}{n}H(C_1, C_3, C_5, ...)\geq \frac{1}{n}H(C_1, C_3, C_5,
  ...|M_2X^{(2)}_{1:n})=
  \frac{1}{n}H(C_1, C_2, C_3, ...|M_2X^{(2)}_{1:n})\\&\geq
  \frac{1}{n}I(\overrightarrow{C};X^{(1)}_{1:n}|M_2X^{(2)}_{1:n})=
  \frac{1}{n}\sum_{j=1}^nI(\overrightarrow{C};X^{(1)}_{j}|M_2X^{(2)}_{1:n}X^{(1)}_{j+1:n})
  =I(\overrightarrow{F};X^{(1)}|M_2X^{(2)}_{J+1:n}X^{(2)}Z)
  \\&=H(X^{(1)}|M_2X^{(2)}_{J+1:n}X^{(2)}Z)-H(X^{(1)}|M_2X^{(2)}_{J+1:n}X^{(2)}Z\overrightarrow{F})\\&=
  H(X^{(1)}|X^{(2)}Z)-H(X^{(1)}|M_2X^{(2)}_{J+1:n}X^{(2)}Z\overrightarrow{F})=I(M_2X^{(2)}_{J+1:n}\overrightarrow{F};X^{(1)}|X^{(2)}Z)
  \\&\geq I(\overrightarrow{F};X^{(1)}|X^{(2)}Z)\end{eqnarray*}
  \item Similarly we have:\begin{eqnarray*}&R_{21} \geq I(\overrightarrow{F};X^{(2)}|X^{(1)}Z).\end{eqnarray*}
  \item Note that \begin{eqnarray*}&R_{12}+R_{21}\geq
  \frac{1}{n}H(\overrightarrow{C}|M_2X^{(2)}_{1:n})+\frac{1}{n}H(\overrightarrow{C}|M_1X^{(1)}_{1:n})\geq
  \\&
  \frac{1}{n}I(\overrightarrow{C};X^{(1)}_{1:n}|M_2X^{(2)}_{1:n})+ \frac{1}{n}H(\overrightarrow{C}|M_2X^{(2)}_{1:n}X^{(1)}_{1:n})
  +
  \frac{1}{n}I(\overrightarrow{C};X^{(2)}_{1:n}|M_1X^{(1)}_{1:n})+ \frac{1}{n}H(\overrightarrow{C}|M_1X^{(1)}_{1:n}X^{(2)}_{1:n})
  \end{eqnarray*}
  The first and the third term are respectively greater than or equal to
  $I(\overrightarrow{F};X^{(1)}|X^{(2)}Z)$ and
  $I(\overrightarrow{F};X^{(2)}|X^{(1)}Z)$ as shown above. For the second and the fourth
  term, we have:
\begin{eqnarray*}&
\frac{1}{n}H(\overrightarrow{C}|M_2X^{(2)}_{1:n}X^{(1)}_{1:n})+
\frac{1}{n}H(\overrightarrow{C}|M_1X^{(1)}_{1:n}X^{(2)}_{1:n})\geq\\&
I(M_1\overrightarrow{C};M_2\overrightarrow{C}|X^{(2)}_{1:n}X^{(1)}_{1:n})
\end{eqnarray*}
since for any odd $i$,
\begin{eqnarray*}&
I(M_1C_{1:i};M_2C_{1:i}|X^{(2)}_{1:n}X^{(1)}_{1:n})=
I(M_1C_{1:i};M_2C_{1:i-1}|X^{(2)}_{1:n}X^{(1)}_{1:n})+
I(M_1C_{1:i};C_i|X^{(2)}_{1:n}X^{(1)}_{1:n}M_2C_{1:i-1})\\&
=I(M_1C_{1:i-1};M_2C_{1:i-1}|X^{(2)}_{1:n}X^{(1)}_{1:n})+
I(M_1C_{1:i};C_i|X^{(2)}_{1:n}X^{(1)}_{1:n}M_2C_{1:i-1})\\& \leq
I(M_1C_{1:i-1};M_2C_{1:i-1}|X^{(2)}_{1:n}X^{(1)}_{1:n})+
H(C_i|X^{(2)}_{1:n}X^{(1)}_{1:n}M_2C_{1:i-1})
\\& =
I(M_1C_{1:i-1};M_2C_{1:i-1}|X^{(2)}_{1:n}X^{(1)}_{1:n})+
H(C_i|X^{(2)}_{1:n}X^{(1)}_{1:n}M_2C_{1:i-1})+
H(C_i|X^{(2)}_{1:n}X^{(1)}_{1:n}M_1C_{1:i-1}).
\end{eqnarray*}
Similarly, for any even $i$,
\begin{eqnarray*}&
I(M_1C_{1:i};M_2C_{1:i}|X^{(2)}_{1:n}X^{(1)}_{1:n})
\\& \leq
I(M_1C_{1:i-1};M_2C_{1:i-1}|X^{(2)}_{1:n}X^{(1)}_{1:n})+
H(C_i|X^{(2)}_{1:n}X^{(1)}_{1:n}M_2C_{1:i-1})+
H(C_i|X^{(2)}_{1:n}X^{(1)}_{1:n}M_1C_{1:i-1}).
\end{eqnarray*}
Further, $I(M_1;M_2|X^{(2)}_{1:n}X^{(1)}_{1:n})=0$. Hence
\begin{eqnarray*}&
\frac{1}{n}H(\overrightarrow{C}|M_2X^{(2)}_{1:n}X^{(1)}_{1:n})+
\frac{1}{n}H(\overrightarrow{C}|M_1X^{(1)}_{1:n}X^{(2)}_{1:n})\geq\\&
\frac{1}{n}I(M_1\overrightarrow{C};M_2\overrightarrow{C}|X^{(2)}_{1:n}X^{(1)}_{1:n})
\end{eqnarray*}
Since $\widehat{Y}^{(1)}_{1:n}$ is created from
$M_1\overrightarrow{C}X^{(1)}_{1:n}$ and $\widehat{Y}^{(2)}_{1:n}$
is created from $M_2\overrightarrow{C}X^{(2)}_{1:n}$, we have:
\begin{eqnarray*}&
\frac{1}{n}H(\overrightarrow{C}|M_2X^{(2)}_{1:n}X^{(1)}_{1:n})+
\frac{1}{n}H(\overrightarrow{C}|M_1X^{(1)}_{1:n}X^{(2)}_{1:n})\geq\\&
\frac{1}{n}I(\widehat{Y}^{(1)}_{1:n};\widehat{Y}^{(2)}_{1:n}|X^{(2)}_{1:n}X^{(1)}_{1:n}).\end{eqnarray*}
Note that \begin{eqnarray*}&\frac{1}{n}I(\widehat{Y}^{(1)}_{1:n};\widehat{Y}^{(2)}_{1:n}|X^{(2)}_{1:n}X^{(1)}_{1:n})=\\&
\frac{1}{n}H(\widehat{Y}^{(1)}_{1:n}X^{(2)}_{1:n}X^{(1)}_{1:n})+\frac{1}{n}H(\widehat{Y}^{(2)}_{1:n}X^{(2)}_{1:n}X^{(1)}_{1:n})\\&
-\frac{1}{n}H(\widehat{Y}^{(1)}_{1:n}\widehat{Y}^{(2)}_{1:n}X^{(2)}_{1:n}X^{(1)}_{1:n})-\frac{1}{n}H(X^{(2)}_{1:n}X^{(1)}_{1:n}).\end{eqnarray*}
Using the Csisz\'{a}r-K\"{o}rner inequality \cite[Lemma
2.7]{CsiszárKörner}, we can bound this expression from below by
\begin{eqnarray*}&
\frac{1}{n}H(Y^{(1)}_{1:n}X^{(2)}_{1:n}X^{(1)}_{1:n})+\frac{1}{n}H(Y^{(2)}_{1:n}X^{(2)}_{1:n}X^{(1)}_{1:n})\\&
-\frac{1}{n}H(Y^{(1)}_{1:n}Y^{(2)}_{1:n}X^{(2)}_{1:n}X^{(1)}_{1:n})-\frac{1}{n}H(X^{(2)}_{1:n}X^{(1)}_{1:n})+3\frac{\epsilon}{n}\log(\frac{\epsilon}{(|\mathcal{X}^{(1)}||\mathcal{X}^{(2)}||\mathcal{\widehat{Y}}^{(1)}||\mathcal{\widehat{Y}}^{(2)}|)^n})
=\\&
\frac{1}{n}I(Y^{(1)}_{1:n};Y^{(2)}_{1:n}|X^{(2)}_{1:n}X^{(1)}_{1:n})+3\frac{\epsilon}{n}\log(\frac{\epsilon}{(|\mathcal{X}^{(1)}||\mathcal{X}^{(2)}||\mathcal{\widehat{Y}}^{(1)}||\mathcal{\widehat{Y}}^{(2)}|)^n})\geq\\&
\frac{1}{n}I(Y^{(1)}_{1:n};Y^{(2)}_{1:n}|X^{(2)}_{1:n}X^{(1)}_{1:n})-3\kappa(\epsilon)=
I(Y^{(1)};Y^{(2)}|X^{(2)}X^{(1)})-3\kappa(\epsilon).
\end{eqnarray*}
Therefore
\begin{eqnarray*}R_{12}+R_{21}\geq
I(\overrightarrow{F};X^{(1)}|X^{(2)}Z)+I(\overrightarrow{F};X^{(2)}|X^{(1)}Z)+
I(Y^{(1)};Y^{(2)}|X^{(2)}X^{(1)})-3\kappa(\epsilon)
\end{eqnarray*}
\end{itemize}

\section{}\label{AppendixC}
We begin by imposing a cardinality bounds on $|\mathcal{F}_1|$, then
on $|\mathcal{F}_2|$, and so on. Having imposed cardinality
constraints on $|\mathcal{F}_{k-1}|$, $|\mathcal{F}_{k-2}|$, ... and
$|\mathcal{F}_{1}|$ will we show that the cardinality of
$|\mathcal{F}_k|$ can be bounded from above by
$|\mathcal{F}_{k-1}||\mathcal{F}_{k-2}|\cdot\cdot\cdot
|\mathcal{F}_{1}||\mathcal{X}^{(1)}||\mathcal{Y}^{(1)}||\mathcal{X}^{(2)}||\mathcal{Y}^{(2)}|+1$.

In order to impose the cardinality bound on $F_1$, we fix $p(f_2,
..., f_r, x^{(1)}, x^{(2)}, \widehat{y}^{(1)},
\widehat{y}^{(2)}|f_1)$ and vary the marginal distribution $p(f_1)$.
The equations $I(F_i;X^{(2)}|F_{1:i-1}X^{(1)})=0$ for odd $i$,
$I(F_i;X^{(1)}|F_{1:i-1}X^{(2)})=0$ for even $i$,
$I(\widehat{Y}^{(1)};X^{(2)}\widehat{Y}^{(2)}|\overrightarrow{F}X^{(1)})=0$
and $I(\widehat{Y}^{(2)};X^{(1)}|\overrightarrow{F}X^{(2)})=0$ hold
irrespective of $p(f_1)$. We need to impose
$|\mathcal{X}^{(1)}||\mathcal{\widehat{Y}}^{(1)}||\mathcal{X}^{(2)}||\mathcal{\widehat{Y}}^{(2)}|-1$
equations ensure that the joint distribution of $p(x^{(1)}, x^{(2)},
\widehat{y}^{(1)}, \widehat{y}^{(2)})$ does not change, one equation
for the probability constraint $\sum_{f_1}p(f_1)=1$, and one
equation corresponding to the two terms $I(X^{(1)};
\overrightarrow{F}|X^{(2)})$ and $I(X^{(2)}; \overrightarrow{F}|
X^{(1)})$ (since we are using the generalized Carath\'{e}odory
theorem of Fenchel). Therefore we get the cardinality
bound of
$|\mathcal{X}^{(1)}||\mathcal{\widehat{Y}}^{(1)}||\mathcal{X}^{(2)}||\mathcal{\widehat{Y}}^{(2)}|+1$
on $|\mathcal{F}_{1}|$. In order to find cardinality bounds for
$|\mathcal{F}_2|$, we fix $$p(f_1, f_3, ..., f_r, x^{(1)}, x^{(2)},
\widehat{y}^{(1)}, \widehat{y}^{(2)}|f_2)$$ and vary the marginal
distribution $p(f_2)$. The equations
$I(F_i;X^{(2)}|F_{1:i-1}X^{(1)})=0$ for odd $i\geq 3$,
$I(F_i;X^{(1)}|F_{1:i-1}X^{(2)})=0$ for even $i\geq 3$,
$I(\widehat{Y}^{(1)};X^{(2)}\widehat{Y}^{(2)}|\overrightarrow{F}X^{(1)})=0$
and $I(\widehat{Y}^{(2)};X^{(1)}|\overrightarrow{F}X^{(2)})=0$ hold
irrespective of $p(f_2)$. Next, we impose
$|\mathcal{F}_{1}||\mathcal{X}^{(1)}||\mathcal{\widehat{Y}}^{(1)}||\mathcal{X}^{(2)}||\mathcal{\widehat{Y}}^{(2)}|-1$
equations ensure that the joint distribution of $p(f_1, x^{(1)},
x^{(2)}, \widehat{y}^{(1)}, \widehat{y}^{(2)})$ does not change.
This implies that the equations $I(F_i;X^{(2)}|F_{1:i-1}X^{(1)})=0$
for odd $i< 3$, $I(F_i;X^{(1)}|F_{1:i-1}X^{(2)})=0$ for even $i< 3$
hold. This argument can be repeated for $F_3$, $F_4$, ... and so on.

\section*{Acknowledgment}
The research was partially supported by the NSF grants
CCF-0500234, CCF-0635372, CNS-0627161, and
CNS-0910702, the ARO MURI grant
W911NF-08-1-0233
``Tools for the Analysis and Design of Complex Multi-Scale Networks",
and the NSF Science and Technology Center grant CCF-0939370,
``Science of Information".


\begin{thebibliography}{1}

\bibitem{AB}
V. Anantharam and V. Borkar,
``Common randomness and distributed control: A counterexample."
\emph{Systems and Control Letters}, 56: 568 -572 (2007).

\bibitem{CoverThomas}
T. M. Cover and J. A. Thomas, \emph{Elements of Information Theory},
John Wiley and Sons, 1991.

\bibitem{CsiszárKörner}
I. Csisz\'{a}r and J. K\"{o}rner, ``Information Theory: Coding
Theorems for Discrete Memoryless Systems." Budapest, Hungary:
Akad�miai Kiad�, 1981.

\bibitem{CoverPermuter}
P. Cuff, T. Cover and H. Permuter, ``Coordination Capacity," \emph{IEEE Trans. IT}, 56 (9): 4181-4206 (2010)

\bibitem{Cuff}
Paul Cuff, ``Communication Requirements for Generating Correlated
Random Variables", Proc IEEE Int Symp Info Theory, pp. 1393-1397,
2008.

\bibitem{HanVerdu}
T. S. Han and S. Verd\'{u}, ``Approximation Theory of Output
Statistics," \emph{IEEE Trans. IT}, 39 (3): 752-772  (1993).

\bibitem{KramerSavari}
G. Kramer and S. Savari, ``Communicating Probability Distributions,"
\emph{IEEE Trans. IT}, 53 (2): 518 -525 (2007).

\bibitem{Wyner}
A. Wyner, ``The Common Information of Two Dependent Random
Variables," \emph{IEEE Trans. IT}, vol. 21 (2): 163-179  (1975).


\end{thebibliography}
\end{document}